\tikzstyle{level 1}=[level distance=3.5cm, sibling distance=3.5cm]
\tikzstyle{level 2}=[level distance=3.5cm, sibling distance=2cm]
\tikzstyle{bag} = [text width=9em, text centered]
\tikzstyle{end} = [circle, minimum width=3pt,fill, inner sep=0pt]
\pgfplotsset{compat=newest,
    width=6cm,
    height=3cm,
    scale only axis=true,
    max space between ticks=25pt,
    try min ticks=5,
    every axis/.style={
        axis y line=left,
        axis x line=bottom,
        axis line style={thick,->,>=latex, shorten >=-.4cm}
    },
    every axis plot/.append style={thick},
    tick style={black, thick}
}
\tikzset{
    semithick/.style={line width=0.8pt},
}
\newcommand{\ba}{\begin{eqnarray}}
\newcommand{\ea}{\end{eqnarray}}
\newcommand{\Life}{\Lambda}
\newtheorem{theorem}{Theorem}[section]
\newtheorem{lemma}[theorem]{Lemma}
\newtheorem{assumption}[theorem]{Assumption}
\newtheorem{definition}[theorem]{Definition}
\newtheorem{proposition}[theorem]{Proposition}
\newtheorem{remark}[theorem]{Remark}
\newtheorem{example}[theorem]{Example}
\newcommand{\cadlag}{c\`adl\`ag }
\title{Model-free  Analysis of Dynamic Trading Strategies}
\author{Anna ANANOVA\thanks{Mathematical Institute, University of Oxford, UK. {\bf Email}: \url{ananova.math@gmail.com}. } \,\, Rama CONT\thanks{Mathematical Institute, University of Oxford, UK. {\bf Email}: \url{Rama.Cont@maths.ox.ac.uk}.} \,\, and Renyuan XU\thanks{Department of Finance and Risk Engineering, New York University, US. {\bf Email}: \url{rx2364@nyu.edu}. R.X. is partially supported
by the NSF CAREER award DMS-2339240 and a JP Morgan Faculty Research Award.}}
\date{\today}
\begin{document}

\maketitle
\begin{abstract}
We introduce a model-free approach for  analyzing the risk and return for a broad class of dynamic trading strategies, including pairs trading, mean-reversion trading  and other statistical arbitrage strategies, in terms of {\it excursions} of a trading signal away from a reference level. Our results are derived in a pathwise setting, without any probabilistic assumptions.

We introduce the notion of {\it $\delta$-excursion}, defined as a path which deviates by $\delta$ from a reference level  before returning to this level. We show that every continuous path has a unique decomposition into {$\delta$-excursions}. This decomposition is useful for the scenario analysis of dynamic trading strategies, leading to simple expressions for the number of trades, realized profit, maximum loss, and drawdown. We show that the high-frequency limit of mean-reversion strategies may be described in terms of the ($p-$th order)
local time of the signal. In particular, our results yield a financial interpretation of the local time of an irregular path. Finally, we describe a non-parametric scenario simulation method for generating paths whose excursion properties match those observed in empirical data.
\end{abstract}
{\sc Keywords}: excursion theory,  local time, mean-reversion strategies, rough processes, p-th variation, pairs trading,  drawdown risk, statistical arbitrage.
%\newpage\tableofcontents
%\newpage 
\section{Introduction}
A broad class of trading strategies may be described in terms of the relation between the {\it market price} $P_t$ of an asset --a stock, bond, commodity, a spread between two such assets, or a basket of assets-- and a {\it reference} level $A_t$, which may refer to an assessment of the portfolio's fundamental value by an analyst, or a forecast of the portfolio's value based on `technical' indicators, such as moving average estimators used in pairs trading \cite{pairs} or 'technical indicators' used in  statistical arbitrage strategies \cite{alexander1999,anca2005,avellaneda2010,hogan2004}.
The deviation $ S = P - A $ of the market price from the reference value then represents a {\it trading signal}.
If $S$ falls below some negative threshold $-\delta<0$, this represents a buying opportunity, while if $S$
exceeds a positive threshold $\delta>0$, this represents an opportunity for entering a short position.
A wide range of trading strategies -- pairs trading \cite{gatev2006,pairs}, mean-reversion strategies \cite{avellaneda2010,leung2015}, statistical arbitrage strategies based on cointegration \cite{alexander1999}, index arbitrage \cite{anca2005} and other statistical arbitrage strategies \cite{avellaneda2010,hogan2004}-- fall under this description. The reference level $A_t$ is computed differently in each of these examples, but once the signal $S=P-A$ is constructed all  these strategies follow the description given above.

Regardless of how the reference value $A_t$ is arrived at, e.g. using fundamental valuation principles, or statistical forecasts, this leads to similar features across all such trading strategies: a long position is entered when the signal $S$ crosses $-\delta$ and held until $S$ crosses $0$; similarly, a short position is entered when $S$ crosses $\delta$ and held until $S$ crosses zero.
The holding periods of positions thus coincide with  {\it excursions} of the signal $S$ above (or below) certain levels.

This remark has %far-reaching 
interesting implications: it implies that the risk and return profile of such  trading strategies may be described in terms of the properties of  excursions  of the process $S$. For example,  the profit of such a strategy is linked to the number of the excursions alluded to above, while the magnitude of drawdown risk  may be linked to the height of the excursions.

%Analytical and probabilistic properties of excursions of stochastic processes have been studied in detail, starting with P. L\'evy \cite{levy1948} for Brownian motion, K. Ito \cite{ito1972} for Markov processes.Ito's seminal contribution was to note that the excursions of a Markov process from a level may be viewed as a collection of independent random variables in a (infinite-dimensional) space of excursions. This infinite-dimensional viewpoint has proved extremely fruitful for the theoretical study of stochastic processes \cite{blumenthal2012,dynkin1971,pitman2003,rogers1989,rogers1981,williams1974}.

%We develop this idea show that the concept of excursion is very relevant for the model-free analysis  of  dynamic trading strategies, and illustrate this through the example of pairs trading strategies.

\paragraph{Contribution}
We introduce a novel approach for  analyzing the risk and return for a broad class of dynamic trading strategies, including pairs trading, mean-reversion trading  and other statistical arbitrage strategies
 in terms of {\it excursions} of the signal away from a reference level. 
 Our results are formulated in a pathwise, model-free setting, without any probabilistic assumptions on price dynamics.  

We introduce the notion of {\it $\delta$-excursion}, defined as a path which deviates by $\delta$ from a reference level  before returning to this level. We show that every continuous path has a unique decomposition into {$\delta$-excursions}. This decomposition is shown to be useful for the scenario analysis of dynamic trading strategies, leading to simple expressions for the number of trades, realized profit, maximum loss, and drawdown. 

We show that the high-frequency limit of mean-reversion strategies, which corresponds to the case where the trasding threshold $\delta$ decreases to zero, 
may be described in terms of the ($p-$th order)
local time of the signal. In particular, our results yield a financial interpretation of the local time of an irregular path, as the high-frequency  limit of the profit of a mean-reversion trading strategy.

Finally, we describe a non-parametric scenario simulation method for generating paths whose excursion properties match those observed in empirical data.

%\paragraph{Relation with previous literature}
The construction and empirical performance of pairs trading \cite{pairs,gatev2006} and `mean-reversion' trading strategies \cite{avellaneda2010,leung2015}  considered in this paper have been  studied  by Avellaneda \& Lee \cite{avellaneda2010},
Gatev et al. \cite{gatev2006} and others \cite{pairs,hogan2004}. Leung and Li \cite{leung2015} study mean-reversion strategies from the perspective of optimal control, in the setting of the Ornstein-Uhlenbeck model.
The connection between statistical arbitrage and cointegration has been discussed by many authors, including Alexander  \cite{alexander1999} and Alexander \& Dimitriu \cite{anca2005}.
Our approach provides a different perspective on these results through the angle of excursion theory and explains the common features observed across the variety of strategies considered in these studies.

Excursion theory has also been applied in mathematical finance, for the pricing of certain path-dependent options involving barrier crossings of a price process, such as Parisian options \cite{chesney1997,dassios2010}, barrier options \cite{pistorius2007} or ``occupation time derivatives'' \cite{cai2010}.
These studies focus on analytical results for special models such as Brownian motion \cite{chesney1997} or certain L\'evy processes \cite{cai2010,pistorius2007}.

A   related topic is the modeling of {\it drawdown risk} for trading strategies \cite{grossman1993}. The literature on this topic has focused on the analytical study  of drawdown risk and optimal investment under drawdown constraints in specific models. Zhang \cite{zhang2015} uses excursion theory for one-dimensional diffusion models to derive formulas for drawdown risk of static portfolios. On the other hand empirical studies of drawdown risk indicate that commonly used stochastic models do not correctly quantify drawdown risk even for passive index portfolios \cite{johansen2002}, suggesting that better, more flexible   models are needed.  

%, a topic we also address in this paper.

\paragraph{Outline}
We
propose a {\it model-free} framework for the  analysis of such dynamic trading strategies, based on a description in terms of  {\it excursions} of the underlying trading signal.  

We start in Section \ref{sec:strategy} by describing how properties of a large class of trading strategies may be expressed in terms of excursions of a {\it trading signal} away from zero. We then introduce in Section \ref{sec:pathwise} the notion of {\it $\delta$-excursion}, defined as a path which deviates by $\delta$ from a reference level  before returning to this level. We show that every continuous path has a unique decomposition into such { $\delta$-excursions}, which turns out to be useful for the scenario analysis of dynamic trading strategies, leading to simple expressions for the number of trades, realized profit, maximum loss and drawdown (Section \ref{sec.scenario}).  
In the case of irregular paths which possess a local time, we describe in Section \ref{sec.localtime}  the relation between $\delta$-excursions and  local time at zero of the path. 

%In we   

In Section \ref{sec:modelfreesimulation} we   propose  a non-parametric scenario simulation method for generating paths whose excursions match those observed in a data set.

\section{Mean-reversion strategies} 
\label{sec:strategy}

\subsection{Trading signals}	\label{sec.signals}
Many  trading strategies are based on the assumption that the market price $P_t$ of a reference asset reverts to a `target value' or forecast $A$, %assumed to be constant over a certain trading horizon\footnote{The framework can be easily generalized to accommodate a time-dependent target that changes values at the end of each trade.}, 
although it may deviate from it in the short term. The examples below illustrate the generality of this concept.
%Consider a  portfolio whose price, denoted $(P_t, t\geq 0)$, is assumed to be a continuous function of time: $P\in C^0([0,\infty),\mathbb{R})$. We will refer to this portfolio as the {\it asset}. 

		\begin{example}[Value trading]{\em An investor who believes that the price of the asset will eventually revert to a `fundamental' value $A>0$ will choose to buy the asset when $P_t$ drops below $A$ and short the asset when $P_t$ exceeds $A$. This `fundamental' value can be a book value or a valuation by a financial analyst. The deviation $S_t=P_t-A$ from the fundamental value then plays the role of trading signal.}
	\end{example}
		\begin{example}[Pairs trading]{\em  Pairs trading is a relative-value trading strategy which looks for pairs of assets whose prices $P^1,P^2$ are cointegrated \cite{alexander1999}, i.e. there exists a stationary combination $P_t=P^1_t-w P^2_t$.  $w$ is typically estimated using regression techniques \cite{gatev2006}. If $A$ is the stationary mean of $P_t$ then the deviation $S_t=P^1_t-w P^2_t -A$ is expected to revert to zero and is used as a trading signal. In practice, this means $A$ is estimated as time  average of past values \cite{pairs}. }
  \end{example}
		
		\begin{example}[Mean-reversion strategies]{\em Many {\it statistical arbitrage strategies} \cite{avellaneda2010,hogan2004} are based on identifying combinations of assets (portfolios) whose market price follows a stationary, mean-reverting process \cite{anca2005,avellaneda2010}, using methods such as index tracking or cointegration \cite{anca2005}.
		
		The market price $P_t=\sum w_i P^i_t$ of such a stationary combination is then expected to revert to its  mean $A$, which may be estimated using for instance an average over the previous trading period, leading to the trading signal   $S_t= \sum w_i P^i_t - A$ which is expected to revert to zero.
	}\end{example}

	%	\begin{example}[Delta-hedging]{\em A very different type of dynamic strategy involving level-crossings arises in hedging strategies of derivatives portfolios. Given a derivatives portfolio with value $V(t,X_t)$ depending on an underlying asset $X_t,$ the widely used 'delta-hedging' strategy aims to maintain a position $\phi_t$ in the underlying asset such that the overall sensitivity or 'delta' of the portfolio is zero: $\Delta(t,X_t)=\phi_t+\partial_x V(t,X_t)\simeq 0$. In presence of transaction costs it is not feasible to do this continuously and trades are executed when $|\Delta(t,X_t)|\geq \delta$	where $\delta$ represents a limit on the portfolio sensitivity. Here $S_t=\Delta(t,X_t)$ acts as a signal for triggering transactions.	}\end{example}
	
	%	\begin{example}[Statistical arbitrage strategies]{\em More generally, {\it statistical arbitrage strategies} \cite{avellaneda2010,hogan2004} use a statistical model to forecast the price of a reference portfolio over a time horizon $\Delta$ and take a long (resp. short) position in the portfolio depending on whether the current market price $P_t$ is lower (resp. higher)	than the forecast $A_t$. If the forecast $A_t$ is  unbiased then the trading signal 	$ S_t=  P_t-A_t$ has mean zero and fluctuates around zero.	}\end{example}
		These strategies, while distinct in their design, share a common feature: they are based on the assumption that a trading signal $ S_t= P_t - A$, defined as the deviation of the market price $P_t$ of a reference asset from a target value $A$,  reverts to zero over some time horizon. This assumption implies that if $S_t<0$ (resp. $S_t>0$) one should take a long (resp. short) position in the portfolio $P$.

{In the presence of transaction costs, such transactions will be entered only if the amplitude of the signal  reaches some threshold   $\delta$ larger than the transaction cost per trade:}
	 \begin{itemize}
	     \item[(i)] Enter a long position in the reference portfolio   when $S_t$ drops below $-\delta$; unwind the long position when $S_t$ crosses zero;
	     \item[(ii)] Enter a short position in the portfolio when $S_t$ exceeds $\delta$; unwind the short position when $S_t$ crosses zero. \end{itemize}
	 Such a strategy may be implemented through limit orders placed at the appropriate price levels, resulting in transactions when the market price $P_t$ crosses these  levels.

We now describe the associated trading strategies and their properties in more detail.
%	Let us assume we have an average value $a$, and we believe that the price reverts to this value, and let $\delta>0$ be the threshold of the price move from its average after which we decide to trade. Unless specified otherwise, we shall assume $S_0=0$.
	 \subsection{Representation of mean-reversion %trading 
  strategies in terms of excursions}\label{sec:trades_and_excurson}
	Regardless of how  the  signal $S$ is constructed, the trading strategies in the above examples share some common features, which may be described in terms of the {\it level crossings} of the signal $S$. 
	
	We define the following level crossing times of $S$ {(with $S_0=0$)}: we set $\tau^+_0=0$, $\theta^+_0=0$ and 
	\ba  \forall i\geq 1,\quad \tau^+_i=\inf\{ t>\theta^+_{i-1}, S_t\geq \delta \}& & \theta^+_i=\inf\{ t>\tau_{i}^+, S_t\leq 0 \}.\label{eq.tau} \ea
%	The stopping times  $\theta_{i-1}^+\leq \tau^+_i\leq \theta_i^+\leq ..$  mark the successive excursions of $S$ from $0$ to $\delta$ and back to zero.
	The intervals $(\tau^+_i, \theta^+_i),\, (\theta^+_i, \tau^+_{i+1})$ are the down-crossing and up-crossing intervals of the interval $[0, \delta].$
Each interval $[\theta^+_i, \theta^+_{i+1}]$, corresponds to an {\it excursion} of $S$ from $0$ to $\delta$ and back to zero. 

	It is readily observed that the intervals {$ (\theta^+_i, \tau^+_{i+1}), (\tau^+_{i+1}, \theta^+_{i+1})$, for $i \ge 0$, }form a partition of $[0,\infty)$ and, if the path is continuous,
	they are all non-empty. 
 %\footnote{ In the case of \cadlag paths a finite number of these intervals could be empty, i.e. $\tau^+_i =\theta^+_i$, if the process jumps across the interval $(0, \delta)$. In this paper we focus on the case of continuous trajectories.}
	One can also define similar quantities for downward  excursions: 
	\ba \tau^-_0= \theta^-_{0}=0,\quad{\rm and}\quad \forall i\geq 1,\quad \tau^-_i=\inf\{ t>\theta^-_{i-1}, S_t\leq -\delta \}& & \theta^-_i=\inf\{ t>\tau^-_{i}, S_t\geq 0 \}. \ea
	
	A mathematical description of the trading strategies described in Section \ref{sec.signals} can now be given in terms of the level crossing times defined above:
	\begin{itemize}
	    \item  buy the reference portfolio when the trading signal drops below $-\delta$, sell when it returns to $0$:
	    \ba\phi^- =  \sum_{k\geq 1} 1_{[\tau^-_k, \theta^-_k)}.\label{eq.phi-}\ea
	    \item short the reference portfolio when the signal exceeds $\delta$,  unwind the position when it reaches $0$:  \ba\phi^+ = - \sum_{k\geq 1} 1_{[\tau^+_k, \theta^+_k)}.\label{eq.phi+}\ea
	\end{itemize}
	We refer to $\phi^+, \phi^-$ as one-sided strategies.

		 Combining the two strategies we obtain what is usually called a  'mean-reversion strategy' or 'convergence trade' based on the trading signal $S$:
	\ba\phi^0(t)=\phi^+(t)+\phi^-(t)=  \sum_{k\geq 1} 1_{[\tau^-_k, \theta^-_k)}- \sum_{k\geq 1} 1_{[\tau^+_k, \theta^+_k)}.\label{eq.2sided}\ea
		One may also consider a position size which depends on the level $S$.
		For example, \eqref{eq.2sided} has unbounded exposure to price movements and in most cases portfolios are subject to position limits or exposure limits ('stop loss'). A maximum exposure  limit of $M$ on short positions in \eqref{eq.phi+} leads to unwinding the position if $S$ reaches $\delta+M$ during the holding period:
			\ba\phi^+_M(t)= - \sum_{k\geq 1} 1_{[\tau^+_k, \theta^+_k \wedge \kappa_k )}\qquad \kappa_k=\inf\{ t>\tau^+_k, {S_t} \geq \delta+M\}.\label{eq.Klimit}\ea
We assume the target price is revised at a  lower frequency, outside of holding periods, so that over the horizon $[0,T]$ of of the analysis it is held constant \footnote{The framework can be  generalized without difficulty to accommodate a time-dependent target $A_t$ which is updated outside of holding periods.}:
 \begin{assumption}[Trading signal]\label{ass.signal}
     {\em The trading signal $S$ has the form $S_t=P_t-A$ where
     \begin{itemize}
         \item $P_t$ is the market price of a (basket of) traded  asset(s), assumed to be continuous.
         \item $A$ is a 'target' (forecast) value, assumed to be constant over the horizon $[0,T]$.
     \end{itemize}}
 \end{assumption}
 
 Given that  $A$ is a constant, we have $\int_0^t \phi  dS =  \int_0^t \phi dP $ for $0\leq t\leq T$. %$$  \int_0^. \phi(t)  dS_t =  \int_0^. dP_t\qquad \int_0^.  \phi(t) 1_{[\tau^-_k, \theta^-_k )}(t) dS_t = \int_0^.  \phi(t) 1_{[\tau^-_k, \theta^-_k)}(t) dP_t $$
 In addition to the position in the risky asset(s), each portfolio has a cash component, which is adjusted to reflect the gains and losses from trading, so that the strategy is self-financing. 
 Denoting by $V_t(\phi)$ the sum of the cash holdings and the market value of a position $\phi$ in the risky asset, we have
\ba V_t(\phi)=V_0(\phi)+\int_0^t \phi(u-)dS_u. \label{eq.self-financing}\ea
%{\color{blue}in which we assume that $A$ is a forecast value that remains constant.} % Every time we unwind the position, we can adjust the forecast value. Therefore at the jump time of $A$, we never hold any position $\phi = 0$. The trading frequency is much higher, typically occurring on a minute or hourly basis, compared to the updates of $A$, which usually happen on a daily or weekly basis. This discrepancy is a defining characteristic of mean-reversion strategies.
 { Note that as the strategies $\phi$ considered above are piece-wise constant, no further assumption on $S$ is required to define the integral in \eqref{eq.self-financing}.}

	As the sets $\cup_{k\geq 1}[\tau^-_k, \theta^-_k]$ and $\cup_{k\geq 1}[\tau^+_k, \theta^+_k]$ are disjoint we may study the properties of $\phi^+,\phi^-$ separately. In the following sections we will focus on $\phi^+$, but it is clear that properties of $\phi^-$ are analogously obtained by replacing $S$ by $-S$.

Let us now examine further   the properties of the one-sided strategy \eqref{eq.phi+}.
Each transaction cycle $[\theta^+_{k-1},\theta^+_k]$ is decomposed into a {\it waiting period} $[\theta^+_{k-1},\tau^+_k]$ followed by a {\it holding period} $[\tau^+_k, \theta^+_k]$.
The strategy {$\phi^+$} generates a profit of $\delta$ over each transaction cycle, leading to a portfolio value
	 \ba  V_t(\phi^+)=V_0(\phi^+)+\delta\  D^\delta_t(S)+   S_{t\wedge \tau^+_{ D^\delta_t(S)+1}}-S_{t\wedge \theta^+_{ D^\delta_t(S)+1}}, \quad {\rm where}\quad D^\delta_t(S)= \sum_{i\geq 1} 1_{\theta_i^+\leq t } % < \infty %&\qquad{\rm and}\qquad U^\delta_t(S)= \sum_{i\geq 1} 1_{ \tau_i^+ \leq t}.
\ea	
 represents the number of transactions in $[0,t]$ {of $\phi^+$}. The first term $\delta\  D^\delta_t(S)$ represents the {\it realized profit} while the second term corresponds to the market value of the current position.
If the path of $S$ wanders high above $\delta$ then the portfolio can incur a large market loss. 
It is therefore clear that the gains and losses of the trading strategy $\phi^+$ are linked to the frequency, duration and amplitude of positive {\it excursions} of $S$ which exceed the level $\delta$. Similarly, one can readily observe that the gains and losses of $\phi^-$ are linked to the frequency, duration and height of  negative {\it excursions} of $S$ which reach $-\delta$.
In the following sections, we build on this insight and study in more  detail the structure of such excursions in order to model the risk and return profile of such portfolios.
\section{Pathwise results and scenario analysis}
\label{sec:pathwise}
 	
 	\subsection{Excursions and $\delta-$excursions}
 Let ${\cal E}=C^0([0,\infty), \mathbb{R})$ be the space of continuous functions equipped with the Borel measurable structure induced by the uniform norm and 
  ${\cal E}_{0}=\{f\in {\cal E}, f(0)=0\}$. Denote, for $f\in {\cal E},$
	\ba T^{x}(f)=\inf\{t> 0, f(t)= x\},\qquad T_t^{x}(f)=\inf\{u>t, f(u)= x\}.\label{eq.hittingtime}\ea
	Let $\delta\in \mathbb{R}$. {We define ``an excursion from $0$ to $\delta$'' as a path which starts from zero, reaches $\delta$ in a
finite time, and is stopped when it reaches $\delta$:}%We will call an {\it excursion} from $0$ to $\delta$ a path  which starts from zero, reaches $\delta$ in a finite time, and stops when it reaches $\delta$:
	\ba{\cal E}_{0,\delta}=\{ f:C^0([0,\infty)\to \mathbb{R}{)},\, f(0)=0,\, T^\delta(f)< \infty ;\, \forall t\geq T^\delta(f), f(t) = \delta\,\}.\label{eq.Ed}\ea
	Note that by this definition an excursion from $0$ to $\delta$ is stopped at the first time it reaches $\delta$.
	In particular, ${\cal E}_{0,0}$ is the space of {\it excursions} from $0$ to $0$.
	
	Define the {\it concatenation} at $T>0$ of two paths $u,v\in {\cal E}$ as the element
	\ba (u \mathop{\oplus}_{T} v)(t):=  u(t)\ 1_{[0,T)}+v\left(  t-T\right) 1_{[T,\infty)}. \ea
	Note that if $ u\in {\cal E}_{0,a}, v\in {\cal E}_{a,0}$ then for $T\geq T^a(u),$ $u \mathop{\oplus}_{T} v\in {{\cal E}_{0,0}}$.
	
 We define a {\it $\delta-$excursion} as an excursion  from $0$ to $\delta$, followed by an excursion from $\delta$ back to $0$:
   \begin{definition}[$\delta$-excursion]
{\em A $\delta-$excursion is a path $f\in {\cal E}$ such that
\ba  \exists  (u,v)\in {\cal E}_{0,\delta}\times  {\cal E}_{\delta,0},\,\, f=u \mathop{\oplus}_{T^{\delta}(u)} v, \qquad{\rm i.e.}\quad f(t)=u(t)\ 1_{[0,T^{\delta}(u))}+v\left(  t-T^{\delta}(u)\right) 1_{[T^{\delta}(u),\infty)} \label{eq:delta_excursions}\ea
    The  decomposition \eqref{eq:delta_excursions} is then unique and we denote $\Life(f)=T^\delta(u)+ T^{0}(v)$ the {\em duration} of $f$.
    
We denote by ${\cal U}_{\delta}$ the set of $\delta-$excursions.
%\ba\label{eq.U_delta} {\cal U}_{\delta} = {\cal E}_{0,\delta} \oplus {\cal E}_{\delta,0}.\ea
The map $f\in {\cal U}_{\delta} \mapsto (u,v,\Lambda(f))\in {\cal E}_{0,\delta}\times  {\cal E}_{\delta,0}\times [0,\infty)$ is measurable.}
  \label{def:delta_excursions} \end{definition}

Examples of  $\delta-$excursions are
 excursions from $0$ to $0$ which reach $\delta$:
\ba\label{eq:Gammadelta}
\Gamma_\delta=\Big\{ f\in {\cal E}_{0,0}\colon\, \max(f)\geq \delta \Big\}=  {\cal U}_\delta\cap {\cal E}_{0,0}.\,\ea 
 The inclusion $\Gamma_\delta\subset {\cal U}_\delta$ is strict, as a typical  $\delta-$excursion  may reach zero (infinitely) many times before reaching $\delta$ and we may have $\Life(f) > T^0(f)$ for $f\in {\cal U}_{\delta}$. 
In particular ${\cal U}_{\delta}$ is {\it not} a subset of ${\cal E}_{0,0}$.  
		However, each path in ${\cal U}_\delta$ contains {\it exactly} one excursion of type $\Gamma_\delta$:
	\begin{lemma}[Last exit decomposition of $\delta$-excursions] Any $\delta$-excursion $f\in {\cal U}_\delta$ has a unique decomposition into a path from $0$ to $0$ which does not reach $\delta$ {\it followed} by an excursion $\gamma\in \Gamma_\delta$ from $0$ to $0$ which reaches $\delta$:
	\ba \forall f\in {\cal U}_\delta,\quad \exists!(T,g,\gamma)\in [0,\infty)\times {\cal E}_{0}\times \Gamma_\delta,\quad f=g\mathop{\oplus}_{T}\gamma  \quad{\rm with}\quad g(0)=g(T)=0,\quad \max(g)<\delta.\label{eq.Udecomposition}\ea\label{lemma.Ud}\end{lemma}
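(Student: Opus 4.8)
The plan is to exhibit the splitting time $T$ explicitly as the \emph{last exit from $0$ before $f$ first reaches $\delta$}, and then to read off $g$ and $\gamma$ as the pieces of $f$ before and after $T$. Concretely, for $f\in\mathcal{U}_\delta$ set $\sigma:=T^\delta(f)$, which is finite since $f$ reaches $\delta$, and put
\[
T:=\sup\{t\in[0,\sigma)\colon f(t)=0\}.
\]
This set contains $0$, hence is nonempty; since $f(\sigma)=\delta\neq 0$ and $f$ is continuous, $f$ stays away from $0$ on a left-neighborhood of $\sigma$, so $T<\sigma$, and taking zeros $t_n\uparrow T$ and using continuity gives $f(T)=0$. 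I then define $g(t):=f(t\wedge T)$ and $\gamma(s):=f(T+s)$, so that $f=g\mathop{\oplus}_{T}\gamma$ holds by construction.

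For existence I would verify the three defining properties. Continuity of $f$ together with $f(T)=0$ gives $g\in\mathcal{E}_0$ with $g(0)=g(T)=0$ and $\gamma(0)=0$. Because $\sigma=T^\delta(f)$ is the \emph{first} hitting time of $\delta$ and $f(0)=0<\delta$, continuity forces $f<\delta$ on $[0,\sigma)$; as $T<\sigma$, the maximum of $f$ over the compact set $[0,T]$ is attained at a point where $f<\delta$, whence $\max(g)=\max_{[0,T]}f<\delta$. For $\gamma$, the path reaches $\delta$ at time $\sigma-T>0$, and since $f\neq 0$ on $(T,\Lambda(f))$ and $f$ returns to $0$ at $\Lambda(f)$ and remains there, $\gamma$ makes a single sortie, returns to $0$ at $\Lambda(f)-T$ and stays, so $\gamma\in\mathcal{E}_{0,0}$ with $\max(\gamma)\ge\delta$, i.e. $\gamma\in\Gamma_\delta$. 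This establishes existence.

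For uniqueness, suppose $f=g'\mathop{\oplus}_{T'}\gamma'$ is any triple meeting the hypotheses; I would show $T'=T$. On $[0,T']$ the identity $f=g'$ with $\max(g')<\delta$ gives $f<\delta$ there, and $f(T')=\gamma'(0)=0$; hence $T^\delta(f)>T'$. Next, because $\gamma'\in\Gamma_\delta\subset\mathcal{E}_{0,0}$ performs a single excursion---it is nonzero on $(0,T^0(\gamma'))$ and vanishes afterwards---and reaches $\delta$ during that excursion, we have $\gamma'\neq 0$ on $(0,T^\delta(\gamma'))$. Translating by $T'$, this says $f\neq 0$ on $(T',T^\delta(f))$ while $f(T')=0$, which is exactly the characterization of $T$ as the last zero of $f$ before $\sigma$. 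Therefore $T'=T$, and with the splitting time fixed, $g'$ and $\gamma'$ must coincide with the pre-$T$ and post-$T$ parts of $f$, giving $g'=g$ on $[0,T]$ (the only part entering the concatenation) and $\gamma'=\gamma$.

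The delicate point is the uniqueness step, and specifically the claim that the excursion $\gamma$ does not return to $0$ strictly before hitting $\delta$: this is what forces $f\neq 0$ on $(T,\sigma)$ and pins down $T$. It rests on the convention built into $\mathcal{E}_{0,0}$, and hence $\Gamma_\delta$, that an excursion is stopped at its first return to $0$, so it makes exactly one sortie away from the origin. A secondary technical point is that $f$ may touch $0$ infinitely often before $T$---these oscillations are harmlessly absorbed into $g$---so $T$ must be defined as a supremum of zeros and shown to be attained through closedness of the zero set of the continuous map $f$, rather than as a naive "previous zero."
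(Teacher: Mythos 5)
Your proof is correct and follows essentially the same route as the paper: both define $T$ as the last zero of the path before its first hitting time of $\delta$ and then split $f$ there into $g$ and $\gamma$. You additionally spell out the uniqueness argument (that any admissible $T'$ must coincide with this last exit time because $\gamma'\in\Gamma_\delta$ is nonzero on $(0,T^0(\gamma'))$), which the paper leaves implicit.
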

	\begin{proof} Consider a $\delta$-excursion $f\in {\cal U}_\delta$. Then $f$ has a decomposition \eqref{eq:delta_excursions} for some $(u,v)\in {\cal E}_{0,\delta}\times {\cal E}_{\delta,0}$ and $f(t)=0$ for $t\geq \Lambda(f)=T^\delta(u)+T^{0}(v)$. Now define $T$ as the last zero of $u$ before $T^{\delta}(u)$:
	$$ T=\sup\{ t< T^{\delta}(u),\ u(t)=0\}.$$
	Then by continuity of $u$, $T< T^{\delta}(u)$ and therefore $\max\{ u(t), 0\leq t\leq T\}< \delta$. Setting $g=f 1_{[0,T]}$ and {$\gamma(t)=f(t+T)$}, it is readily verified that $\gamma\in \Gamma_\delta $ and $g$ satisfy the required conditions. 

 {Uniqueness of the triple $(T, g, \gamma)$ follows from the fact that, from the decomposition $f=g\mathop{\oplus}_{T}\gamma$, we can identify $\gamma$ as the unique excursion of $f$ starting at $T$ and ending at $\Lambda(f)$, and $g=f\large|_{[0,T]}$.}
	\end{proof}
	\begin{figure}[H]
	\centering
	\input{projection_U_Pi.pgf}
	\caption{Example of last exit decomposition of a $\delta$-excursion $f\in \mathcal{U}_{\delta}$ with $\delta=1$: $f=g\mathop{\oplus}_{T}\gamma$ where  $g$ is in purple and $\gamma\in \Gamma_\delta$ (in orange) is the last excursion. }
	\end{figure}
	\subsection{Decomposition of a path into $\delta-$excursions}
		The following proposition gives the decomposition of any path    starting  from zero into a sequence of excursions from $0$ to $\delta$ and back to $0$:
	\begin{proposition}
	    Let $S\in C^0([0,\infty),\mathbb{R})$ with $S_0=0$. 
	Define the level crossing times %$(\tau^+_i,\theta_i^+)_{i\geq 1}$  as in \eqref{eq.tau}:
	$$ \theta_0^+=0, \quad\tau_0^+=0,\qquad  \tau_i^+=T^\delta_{\theta_{i-1}^+}(S), \qquad\theta_i^+ =T^{0}_{\tau_i^+}(S).$$
  \ba{\rm Then}\qquad \forall t\geq 0,\quad & D^\delta_t(S)= \sum_{i\geq 1} 1_{\theta_i^+\leq t } < \infty &\qquad{\rm and}\qquad %U^\delta_t(S)= \sum_{i\geq 1} 1_{ \tau_i^+ \leq t}
		\label{eq.upcrossings}\\ 
		\forall t\geq 0, \qquad S_t&=  \sum_{i=1}^{D^\delta_t(S)+1} \Big[ u_i\left( t-\theta_{i-1}^+\right) 1_{[\theta_{i-1}^+,\tau_{i}^+)}  &+ v_{i}\left( t-\tau_{i}^+\right)1_{[\tau_{i}^+,\theta_{i}^+)} \Big], \label{eq.decomposition}\ea
		where $u_i \in {\cal E}_{0,\delta}$ and $v_i \in {\cal E}_{\delta,0}$.\label{prop.decomposition}
	\end{proposition}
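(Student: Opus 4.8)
The plan is to prove the two assertions separately: first the finiteness of $D^\delta_t(S)$, which is the only place where continuity of $S$ does real work, and then the pathwise identity \eqref{eq.decomposition}, which becomes pure bookkeeping once the crossing times are understood. Throughout I take $\delta\neq 0$ (here $\delta>0$). Before either step I would record the ordering of the crossing times. Since $S$ is continuous with $S_{\theta_{i-1}^+}=0\neq\delta$, the time $\tau_i^+=T^\delta_{\theta_{i-1}^+}(S)$ satisfies $\tau_i^+>\theta_{i-1}^+$ whenever finite, because $S$ stays away from $\delta$ on a right-neighborhood of $\theta_{i-1}^+$; symmetrically $\theta_i^+=T^0_{\tau_i^+}(S)>\tau_i^+$ since $S_{\tau_i^+}=\delta\neq 0$. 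This produces a strictly increasing chain $0=\theta_0^+<\tau_1^+<\theta_1^+<\tau_2^+<\cdots$ as long as the terms remain finite.

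For the finiteness claim I would argue by contradiction. If $D^\delta_t(S)=\infty$ for some $t$, then $\theta_i^+\leq t$ for every $i$, so the increasing sequence $(\theta_i^+)_i$ converges to some limit $\ell\leq t$; since $\theta_{i-1}^+<\tau_i^+<\theta_i^+$, the interleaved sequence $(\tau_i^+)_i$ converges to the same $\ell$. Continuity of $S$ then forces $S_\ell=\lim_i S_{\theta_i^+}=0$ and simultaneously $S_\ell=\lim_i S_{\tau_i^+}=\delta$, contradicting $\delta\neq 0$. Hence $D^\delta_t(S)<\infty$ for every $t$. An immediate consequence is that the finite crossing times cannot accumulate, so either $\theta_i^+\to\infty$ or the path eventually stops oscillating (some $\tau_i^+$ or $\theta_i^+$ is infinite).

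To produce the decomposition I would define, for each index $i$ with $\theta_{i-1}^+<\infty$, the stopped segments
\[
u_i(s)=S_{(\theta_{i-1}^++s)\wedge\tau_i^+},\qquad v_i(s)=S_{(\tau_i^++s)\wedge\theta_i^+},\qquad s\geq 0.
\]
Then $u_i(0)=0$, and by the intermediate value theorem $S<\delta$ on $[\theta_{i-1}^+,\tau_i^+)$, so whenever $\tau_i^+<\infty$ one has $T^\delta(u_i)=\tau_i^+-\theta_{i-1}^+<\infty$ with $u_i$ frozen at $\delta$ afterwards, i.e. $u_i\in\mathcal{E}_{0,\delta}$; symmetrically $v_i\in\mathcal{E}_{\delta,0}$ whenever $\theta_i^+<\infty$. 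By the ordering the half-open intervals $[\theta_{i-1}^+,\tau_i^+)$ and $[\tau_i^+,\theta_i^+)$ are pairwise disjoint, and by the finiteness of $D$ their union is all of $[0,\infty)$. Fixing $t$ and setting $n=D^\delta_t(S)$, so that $\theta_n^+\leq t<\theta_{n+1}^+$, I would check that $t$ lies in exactly one of these intervals, namely one of the two belonging to cycle $i=n+1$; the matching term then reduces to $u_{n+1}(t-\theta_n^+)$ or $v_{n+1}(t-\tau_{n+1}^+)$, which equals $S_t$ because $t$ precedes the relevant stopping time, while every term with $i\leq n$ vanishes since both its indicators are zero. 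This yields \eqref{eq.decomposition}.

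The only delicate point, which I expect to require care rather than genuine difficulty, is the final "current" cycle $i=n+1$: its up- or down-segment may be incomplete when the corresponding $\tau_{n+1}^+$ or $\theta_{n+1}^+$ is infinite (the path ceases to oscillate). In that case the stated membership $u_i\in\mathcal{E}_{0,\delta}$, $v_i\in\mathcal{E}_{\delta,0}$ should be read as holding for all \emph{completed} segments $i\leq n$, while for the partial last segment the identity still holds verbatim, since the $\wedge$-stopping in the definitions of $u_i,v_i$ is never activated before time $t$ (for $t<\tau_{n+1}^+$ one has $u_{n+1}(t-\theta_n^+)=S_{t\wedge\tau_{n+1}^+}=S_t$, and for $\tau_{n+1}^+\leq t<\theta_{n+1}^+$ one has $v_{n+1}(t-\tau_{n+1}^+)=S_{t\wedge\theta_{n+1}^+}=S_t$).
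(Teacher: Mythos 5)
Your proof is correct and follows essentially the same route as the paper's: the segments $u_i,v_i$ you define coincide with the paper's (whose extra subtracted terms vanish since $S_{\theta_{i-1}^+}=0$ and $S_{\tau_i^+}=\delta$), and the verification of \eqref{eq.decomposition} is the same bookkeeping, with your explicit treatment of the incomplete final cycle being a welcome extra care. Your finiteness argument is a minor variant — you let the bounded increasing sequence $(\theta_i^+)$ converge and invoke continuity at the limit point, whereas the paper uses uniform continuity on $[0,t]$ together with the fact that the disjoint intervals $(\tau_i^+,\theta_i^+)$ must shrink while the increment stays equal to $\delta$ — but both rest on the same obstruction and are equally valid.
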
 
	\begin{proof} 
	To prove the first assertion, we first note that $S$ is continuous,  thus uniformly continuous on $[0,T]$ for any $T>0$.
	If $D^\delta_t=\infty$ for some $t>0,$ then the set $\{k\in \mathbb{N}, \quad \theta^+_{k}\leq t\}$ is infinite. Since by construction the intervals $(\tau^+_{k}, \theta^+_{k})$ are disjoint, we have
	\[
	    \sum_{\{k, \theta^+_{k}\leq t\}} |\theta^+_{k} -\tau^+_{k}| \leq t <\infty,\quad {\rm so}\quad \inf_{\theta^+_{k}\leq t}|\theta^+_{k} -\tau^+_{k}|= 0\quad{\rm while}\qquad |S_{\theta^+_{k}} -S_{\tau^+_{k}}| =  \delta
	\]
which	contradicts the uniform continuity of $S$ on $[0, t]$.  Therefore $D^\delta_t<\infty$  for all $t\geq 0$.
Starting from:
		\[
			S_t = \sum_{i=1}^{D^\delta_t(S)+1} \Big[ (S_{t\wedge \tau^+_{i}} - S_{t\wedge \theta^+_{i-1}} )  + (S_{t\wedge \theta^+_{i}} - S_{t\wedge \tau^+_{i}} )\Big].
		\]
		Note that
		\[
		    (S_{t\wedge \tau^+_{i}} - S_{t\wedge \theta^+_{i-1}} )  + (S_{t\wedge \theta^+_{i}} - S_{t\wedge \tau^+_{i}} )=
		    (S_{t\wedge \tau^+_{i}} - S_{t\wedge \theta^+_{i-1}} )1_{[\theta_{i-1}^+,\tau_{i}^+)}  + (\delta+S_{t\wedge \theta^+_{i}} - S_{t\wedge \tau^+_{i}} )1_{[\tau_{i}^+,\theta_{i}^+)}.
		\]
	{Setting, for $i\geq 1$,
		\begin{eqnarray*}
		    u_i(t) &:=&  S_{(t+\theta^+_{i-1})\wedge \tau^+_{i}} - S_{(t+\theta^+_{i-1})\wedge \theta^+_{i-1}} = S_{(t+\theta^+_{i-1})\wedge \tau^+_{i}},\\ 
			v_i(t) &:=& \delta + S_{(t+\tau^+_i)\wedge \theta^+_{i}} - S_{(t+\tau^+_i)\wedge \tau^+_{i}} = S_{(t+\tau^+_i)\wedge \theta^+_{i}}.
		\end{eqnarray*}	
		we obtain the desired decomposition. }
	\end{proof}
	The above results translate into a (measurable) decomposition of any continuous path into $\delta$-excursions:
		\begin{proposition}[Decomposition of a path into $\delta-$excursions]\label{prop.pathdecomposition}
		%$D^\delta_t(S), U^\delta_t(S)$ are respectively the number of downcrossings and upcrossings of the interval $(a, \delta)$ in $[0,t]$
	Let $\delta >0$ and $S\in C^0([0,\infty),\mathbb{R})$ with $S_0=0$, and define $D^\delta_t(S)$ as in \eqref{eq.upcrossings}.
	\begin{enumerate}
	    \item[(i)] If $\sup_{t\geq 0} D^\delta_t(S)=\infty$   
	      there exists a unique sequence $(e_k)_{k\geq 1}$ of $\delta-excursions $ $e_k\in {\cal U}_\delta$ such that
	\ba\forall t\geq 0,\quad S_t=   \sum_{k\geq 1} e_k\left( (t-\theta^+_{k-1})_+\right)\qquad{\rm where}\qquad \theta^+_0=0,\quad \theta^+_k= \sum_{i=1}^k \Lambda(e_i).\label{eq.dexcursions}\ea
	    \item[(ii)] If $d=\sup_{t\geq 0} D^\delta_t(S)<\infty$ then there exist  $(e_1, \ldots,e_d)\in ({\cal U}_\delta)^d $ and $e_{d+1}\in { {\cal E}_0 \backslash \mathcal{U}_\delta}$ such that
	    \ba S_t=   \sum_{k=1}^{d+1} e_k\left( (t-\theta^+_{k-1})_+\right)\qquad{\rm where}\qquad \theta^+_0=0,\quad \theta^+_k= \sum_{i=1}^k \Lambda(e_i).\label{eq.finiteexcursions}\ea
	%\item[(iii)]   $\sup(S)<\delta $: in this case $D^\delta_t(S)=0$ for all $t\geq  0$ and ..
 \end{enumerate}
	In all cases the map $S\mapsto (e_k)_{k= 1, \ldots, d+1}$ is measurable.
	\end{proposition}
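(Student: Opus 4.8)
The plan is to assemble the $\delta$-excursions directly from the pieces already furnished by Proposition \ref{prop.decomposition}, to reconstruct $S$ by a telescoping argument, and then to treat uniqueness and measurability separately. The whole scheme hinges on obtaining a clean closed form for each excursion.

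First I would set, for each $k\ge 1$, $e_k := u_k \mathop{\oplus}_{T^\delta(u_k)} v_k$, where $u_k\in{\cal E}_{0,\delta}$ and $v_k\in{\cal E}_{\delta,0}$ are the paths produced by Proposition \ref{prop.decomposition}. By Definition \ref{def:delta_excursions} this is precisely a $\delta$-excursion, so $e_k\in{\cal U}_\delta$, with duration $\Lambda(e_k)=T^\delta(u_k)+T^0(v_k)=(\tau^+_k-\theta^+_{k-1})+(\theta^+_k-\tau^+_k)=\theta^+_k-\theta^+_{k-1}$; summing telescopes to $\theta^+_k=\sum_{i=1}^k\Lambda(e_i)$, exactly as in the statement. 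Using $S_{\theta^+_{k-1}}=0$ and $S_{\tau^+_k}=\delta$, a short computation from the explicit formulas for $u_k,v_k$ gives the identity $e_k(t)=S_{(t+\theta^+_{k-1})\wedge\theta^+_k}$ for all $t\ge 0$, which I would use throughout: it shows that $e_k$ equals $S$ shifted by $\theta^+_{k-1}$ on $[0,\Lambda(e_k))$ and is frozen at $0=S_{\theta^+_k}$ afterwards.

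For the reconstruction \eqref{eq.dexcursions}, fix $t\ge 0$. Since $D^\delta_t(S)<\infty$ by Proposition \ref{prop.decomposition}, there is a unique index $m=D^\delta_t(S)+1$ with $\theta^+_{m-1}\le t<\theta^+_m$, and I would check the three groups of terms: for $k<m$ one has $(t-\theta^+_{k-1})_+=t-\theta^+_{k-1}\ge\Lambda(e_k)$, so $e_k$ is frozen at $0$ there; for $k>m$ one has $t<\theta^+_m\le\theta^+_{k-1}$, so the argument is $0$ and $e_k(0)=0$; and for $k=m$ the closed form gives $e_m(t-\theta^+_{m-1})=S_{t\wedge\theta^+_m}=S_t$. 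Hence the sum collapses to $S_t$. In case (ii), where $d=\sup_t D^\delta_t(S)<\infty$ forces $\theta^+_d<\infty=\theta^+_{d+1}$, I define the tail $e_{d+1}(t):=S_{t+\theta^+_d}$, which lies in ${\cal E}$ since $S_{\theta^+_d}=0$; the same term-by-term bookkeeping, now using $e_{d+1}(t-\theta^+_d)=S_t$ for $t\ge\theta^+_d$ and $(t-\theta^+_d)_+=0$ otherwise, yields \eqref{eq.finiteexcursions}.

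Uniqueness in case (i) I would prove by induction on $k$. Given a competing decomposition with breakpoints $\tilde\theta_k=\sum_{i\le k}\Lambda(\tilde e_i)$ and $\tilde e_k\in{\cal U}_\delta$, and assuming $\tilde\theta_{k-1}=\theta^+_{k-1}$, the telescoping above shows $S_{\theta^+_{k-1}+s}=\tilde e_k(s)$ on $[0,\Lambda(\tilde e_k))$. Because every element of ${\cal U}_\delta$ stays below $\delta$ until its first hit of $\delta$ and then stays strictly positive until returning to $0$ precisely at the end of its lifetime, the first passage to $\delta$ after $\theta^+_{k-1}$ occurs at $\tau^+_k$ and the first subsequent return to $0$ occurs at $\tilde\theta_{k-1}+\Lambda(\tilde e_k)$; by the very definition of $\theta^+_k$ this forces $\tilde\theta_k=\theta^+_k$, whence $\tilde e_k=e_k$. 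Measurability then follows because the iterated first-passage times $\tau^+_i,\theta^+_i$ are measurable functionals of $S\in{\cal E}_0$, the closed form $e_k(t)=S_{(t+\theta^+_{k-1})\wedge\theta^+_k}$ exhibits each $e_k$ (and $e_{d+1}$) as a measurable function of $S$, and the events $\{\sup_t D^\delta_t=\infty\}$ and $\{\sup_t D^\delta_t=n\}=\{\theta^+_n<\infty=\theta^+_{n+1}\}$ are measurable, so the case split and the assignment $S\mapsto(e_k)$ are measurable on each piece. The hard part will be the reconstruction bookkeeping together with the uniqueness induction, i.e.\ verifying that the breakpoints of any competing decomposition are forced to coincide with the $\theta^+_k$; the construction and measurability are routine once the identity $e_k(t)=S_{(t+\theta^+_{k-1})\wedge\theta^+_k}$ is in hand.
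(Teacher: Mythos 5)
Your proposal is correct and follows essentially the same route as the paper: build each $e_k$ as the restriction of $S$ to $[\theta^+_{k-1},\theta^+_k)$ shifted to the origin (equivalently, the concatenation $u_k\oplus v_k$ from Proposition \ref{prop.decomposition}), verify the reconstruction term by term, reduce uniqueness to uniqueness of the breakpoints $(\theta^+_k)$, and obtain measurability from that of the iterated hitting times and shifts. Your inductive argument forcing any competing breakpoints $\tilde\theta_k$ to coincide with $\theta^+_k$ is a more explicit rendering of the paper's terse claim that the $\theta^+_k$ are the uniquely ordered jump times of $D^\delta_\cdot(S)$, and your explicit treatment of the tail $e_{d+1}$ in case (ii) fills in a step the paper leaves implicit.
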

	The case (i) corresponds to the 'recurrent' case where the path crosses zero and $\delta$ infinitely many times on $[0,\infty)$.
		\begin{proof}	
	Set $d=\sup_{t\geq 0}D^{\delta}_t(S)\in \mathbb{N}\cup \{\infty\}$.  
		Define $(\theta^+_k, k\geq 1)$ as in \eqref{eq.tau}. 
		For $k<d$, set {$e_k(t)=S_{t+\theta^+_{k-1}}\,1_{[0,\theta^+_k-\theta^+_{k-1})}(t)$}. Then it is easily verified, from the definition \eqref{eq.tau} of $\theta_k^+$, that $e_k\in {\cal U}_\delta $ and $\Lambda(e_k)=\theta^+_k-\theta^+_{k-1}$. Measurability of the map $S\mapsto (e_k)_{k\geq 1}$ follows from the measurability of the hitting times and the shift operator.
		To show uniqueness, we note that  \eqref{eq.dexcursions} implies that $e_k(\cdot - \theta^+_{k-1})=S\large|_{[\theta^+_{k-1},\theta^+_k)}$ so it is sufficient to show uniqueness of the sequence $(\theta^+_k)_{k\geq 0}$. As $D^\delta_t(S)<\infty$ for each $t>0$, the countable set $\{t>0, \Delta D^\delta_t\neq 0\}$ is discrete and has a unique increasing ordering, which is given by $(\theta^+_k)_{k\geq 0}$.
	\end{proof}
	\begin{remark}{\em The above results decompose the path into one-sided $\delta$-excursions i.e. with $\delta>0$. One can immediately obtain a similar decomposition for $\delta<0$ by applying the above result to the path $-S$. To obtain a decomposition in terms of {\it two-sided} $\delta$-excursions, one can iterate these two results: first decompose $S$ into $\delta$-excursions, then decompose each $\delta$-excursion into $(-\delta)-$excursions. One may further show that the resulting decomposition is independent of the order of these two operations.}
	\end{remark}
\subsection{Scenario analysis for mean-reversion strategies}\label{sec.scenario}

The {\it drawdown} \cite{grossman1993} of a  portfolio $\phi$ whose value at time $t$ is $V_t(\phi)$  is defined as 
$$ \Delta(t)= M_t(\phi)- V_t(\phi)\qquad{\rm where}\quad M_t(\phi)=\mathop{\max}_{s\in[0,t]} V_{s}(\phi) $$
is the running maximum. 
 
The decomposition of the path into $\delta-$excursion  given in Proposition \ref{prop.pathdecomposition} leads to simple expressions for the portfolio value, the maximum loss and the drawdown of the strategy:
	\begin{proposition} 
		 Along a path $S$ with decomposition \eqref{eq.decomposition}, 	
		 \begin{enumerate}
		     \item[(i)] the gain 	$V_t(\phi^+)-V_0(\phi^+)= \int_0^t \phi^+ dS$ of the portfolio is given by
		     \ba V_t(\phi^+)-V_0(\phi^+)= \delta \times D_t^\delta(S) + 1_{[\tau^+_{D_t^{\delta}+1},\theta^+_{D_t^{\delta}+1}]}\left(\delta -v_{D^{\delta}_t(S)+1}(t-\tau^+_{D_t^{\delta}+1}) \right). \ea
		     \item[(ii)] the worst loss during $[0,t]$ is given by
		\ba\max_{s\in [0,t]}\left( V_0(\phi^+)-V_s(\phi^+)\ \right)= \mathop{\max}_{k=0,\ldots, D^{\delta}_t(S)} \{   \max_{[0, (t - \tau^+_{k+1})_+]}\left(v_{k+1}- (k+1)\delta \right)\}.\label{eq:worst_loss} \ea
		\item[(iii)] the drawdown of $\phi^+$ is given by
	\ba
 \Delta(t)&=&	\mathop{\max}_{k=0,\ldots, D^{\delta}_t(S)} \{   \max_{[0, (t - \tau^+_{k+1})_+]}\left( (k+1)\delta-v_{k+1} \right)\}\nonumber\\
 &&-\delta \times D_t^\delta(S) - 1_{[\tau^+_{D_t^{\delta}+1},\theta^+_{D_t^{\delta}+1}]}(t)\left(\delta -v_{D^{\delta}_t(S)+1}(t-\tau^+_{D_t^{\delta}+1}) \right).\label{eq:drawdown}
	\ea
		%\ba \Delta(t)=\left( v_{D^\delta_t(S)+1}( (t-\tau^+_{D^\delta_t(S)+1})_+) \right)_+ +\mathop{\max}_{\left[0,\left(t-\tau^+_{D^\delta_t(S)+1}\right)\right]}\left(v_{D^\delta_t(S)+1}\right)_{-}
	%	\ea
		 \end{enumerate} 
		
	\end{proposition}
		\begin{proof}
			By definition of the portfolio $\phi^+$, we have
			\[
			V_t(\phi^+)= 	V_0(\phi^+)- \sum_{i=1}^{D^\delta_t(S)+1} \int 1_{[\tau ^+_{i}, \theta^+_{i})} dS = 
				\sum_{i=1}^{D^\delta_t(S)} (S_{\tau^+_i} - S_{\theta^+_i}) +   (S_{t\wedge \tau^+_{D^{\delta}_t(S)+1}} -S_t ),
			\]
			thus
			$	V_t(\phi^+)-	V_0(\phi^+)= \delta \times D_t^\delta(S) + S_{t\wedge \tau^+_{D^{\delta}_t(S)+1} } -  S_t.$ By the definition of $v_i$ this can be rewritten as
			$$	V_t(\phi^+)-	V_0(\phi^+)= \delta \  D_t^\delta(S) + 1_{[\tau^+_{D_t^{\delta}+1},\theta^+_{D_t^{\delta}+1}]}\left(\delta -v_{D^{\delta}_t(S)+1}(t-\tau^+_{D_t^{\delta}+1}) \right), $$ which then implies \eqref{eq:worst_loss}.
{Furthermore,
$$
M(t) = V_0(\phi^+) + {\max}_{k=0,\ldots, D^{\delta}_t(S)} \left\{   \max_{[0, (t - \tau^+_{k+1})_+]}\left( (k+1)\delta-v_{k+1} \right)\right\},
$$
which together with the previous formula yields \eqref{eq:drawdown}.}
			%$\min_{[0,t]} V_t(\phi^+)-	V_0(\phi^+)= \mathop{\min}_{i=0,\ldots, D^{\delta}_t(S)} \{\delta  i - \max(v_{i+1})\}.$
		\end{proof}
		\begin{figure}[h]
		\centering
	\input{figures/price.pgf}
	\input{figures/value.pgf}
	\input{figures/drawdown.pgf}
	\caption{Decomposition of a path into $\delta$-excursions with $\delta=0.5$. Top Left:  decomposition into $u_i$ (blue) and $v_i$ (red). Top Right: Value of the portfolio $\phi^+$. Bottom: Drawdown $\Delta(t)$.}
	\end{figure}

\section{High-frequency asymptotics}
\label{sec.HF}
{For {\it irregular} price paths, the frequency of level crossings for levels $\delta $ close to zero is connected with the concept of {\it local time} of the path \cite{geman1980}. 
We now explore this connection in a pathwise framework and show that it leads to a financial interpretation for the mathematical concept of local time.}

 \subsection{Irregular price paths}\label{sec.localtime}
	
	Intuitively, decreasing the value of the threshold $\delta$  increases the frequency of level crossings and leads to more transactions but with a lower profit  per transaction. For irregular price paths, such as sample paths of stochastic processes, the frequency of level crossing may go to infinity as  $\delta$ decreases to zero (while the profit $\delta$ per transaction  goes to zero), so the  behaviour of the profit in this limit is not clear.
    The exact behaviour of the trading strategy as $\delta\to 0$ is determined by the {\it local time} of the path at $0$, which measures the time spent by the path in a neighbourhood of zero \cite{geman1980}.
	
%	admits a finite quadratic variation $[S]_t$ (along a given fixed sequence of partitions $\left(\pi_n\right)_{n\geq 1}$ of $[0, \infty)$). 

Let $S\in C^0([0,\infty),\mathbb{R})$ and $T>0$. The occupation measure of   $S$ is defined by
	\ba 
	\gamma_T(A):=  \int_0^T 1_A(S_t)\, dt, \qquad\forall A\in \mathcal{B}(\mathbb{R}).
	\ea
	We will say that the path $S$ admits a local time $l_T(S,x)$ if the measure $\gamma_T$ is absolutely continuous with respect to Lebesgue measure on $\mathbb{R}$, in which case we denote
	\[
	l_T(S,x) := \frac{d \gamma_T}{dx} = \lim_{\varepsilon \to 0} \frac{1}{2\varepsilon} \int_0^T 1_{[x -\varepsilon , x+\varepsilon]}(S_t)\, dt.
	\]
	The occupation density is characterized by the {\it occupation time formula}:
	\[
	\int_0^T h(S_t)\, dt = \int_{-\infty}^{+\infty} h(x)\,l_T(S, x)\,dx,\, \forall h \in C^0(\mathbb{R},\mathbb{R}).
	\]
		Intuitively, the local time $l_T(S,x)$ represents the time  $S$ spends at level $x$ during $[0,T]$. We will be interested in particular in the local time at $0$, which we denote 
	$\ell_T(S)=l_T(S,0)$.
	
	The map	$T\mapsto \ell_T(S)$ is increasing, which allows to
	%The notion local time is essential in the theory of Ito excursion, which are excursion of the path $S$ in the set ${\cal E}_{0,0}$. 
	define its right-continuous inverse, the {\it inverse local time} at zero:
	\ba \forall l>0,\qquad \tau_{l}=\inf\{ t>0, \,\,\ell_{t}(S)> l \}. \label{eq.tau_l}\ea
	We note that $l\mapsto \tau_l$ is an increasing \cadlag function of the variable $l.$  
	The occupation density at zero $\ell_t(S)$ increases on the set $\{ t, S_t=0\}$ and is constant along any excursion from $0$, so the discontinuities of $\tau$ correspond to excursions of $S$, and
	  jump intervals of $\tau$ correspond to the complement of the set where $S$ visits  $0$:
	\[
		\mathop{\cup}_{l>0} (\tau_{l-}, \tau_l)=  \{t\geq 0, \, S_t \neq 0\}.
	\]
	Thus the value $l$ of local time  along an excursion may be used as a natural index for labeling excursions of $S$: the excursion at  local time level $l$ is given by
\ba\label{eq:S_excursion} e_l(t, S) =
\begin{cases}
S_{\left(\tau_{l-}+t\right)} 1_{\left(t \leq \tau_{l}-\tau_{l{-}}\right)},&\text { if }\quad \tau_{l}(\omega)-\tau_{l-}(\omega)>0\\ 
	\dag &\text{ if }\quad \tau_{l}(\omega)=\tau_{l-}(\omega).
\end{cases}
\ea
Points of continuity of $\tau_l$, i.e. points at which $\tau_{l-}=\tau_l$ correspond to `infinitesimal excursions' which may arise if the path  has non-zero local time at $0$;  we associate such excursions with a `cemetery' state $e_l(S) = \dag $.
This defines an excursion process $e\colon \mathbb{R}^+ \to \overline{{\cal E}_{0,0}}={\cal E}_{0,0}\cup \{\dag\}.$ For a given set $\Gamma\subset {\cal E}_{0,0}$, we can define the counting process, which counts excursions of  $S$ from $0$ which lie in $\Gamma$, up to  local time $l$ :
\ba
	N_l(\Gamma):= \sum_{\lambda\leq l} 1_{\Gamma}(e_{\lambda}).\label{eq.N}
\ea
Note that in general $N_l(\Gamma)$ can be infinite. We now establish an important connection between this {\it excursion point process} $N$ and the decomposition into $\delta$-excursions given by Proposition \ref{prop.pathdecomposition}. 
Recall the set $\Gamma_\delta$ of excursions from $0$ to $0$ which reach a level $\delta$:
$$\Gamma_\delta=\Big\{ f\in {\cal E}_{0,0}\colon\, \max(f)\geq \delta \Big\}.\,$$
\begin{proposition}\label{prop.NGammad}
	Let $S\in C^0([0,\infty),\mathbb{R})$ be a path with $S_0   =0$ which admits an occupation density $\ell_t(S)$ at zero, with inverse $\tau$ is given by \eqref{eq.tau_l}. For $\delta>0$ let $D^\delta_t(S)$ be the number of $\delta-$excursions of $S$ on $[0,t]$, defined as in {\eqref{eq.upcrossings}}. Then 
	\[
	\forall \delta>0,\quad \forall t >0,\qquad	N_t(\Gamma_{\delta})<+\infty,\, \quad{\rm and}
	\]
	\ba
	\forall t>0,\quad	D^{\delta}_t(S) = N_{\ell_t(S)}(\Gamma_{\delta})\quad \text{\rm and } \quad\forall l>0, \quad D^{\delta}_{\tau_l}(S) = N_{l}(\Gamma_{\delta}) .
\label{eq.identity}	\ea
\end{proposition}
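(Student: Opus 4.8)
The plan is to reduce both identities to a single bijection between the $\delta$-excursions of $S$ counted by $D^\delta$ and the excursions in $\Gamma_\delta$ counted by $N$, and then to translate carefully between the time parametrization (via the return times $\theta_i^+$) and the local-time parametrization (via the inverse local time $\tau$). Throughout, write $\tau_i^+,\theta_i^+$ for the crossing times of Proposition \ref{prop.decomposition} and let $T_i=\sup\{t<\tau_i^+:S_t=0\}$ be the last zero before the $i$-th up-crossing of $\delta$, as in the last-exit decomposition of Lemma \ref{lemma.Ud}.

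First I would establish the geometric heart of the argument: the unique excursion of type $\Gamma_\delta$ contained in the $i$-th $\delta$-excursion occupies \emph{exactly one} excursion interval of $S$. Indeed, $T_i$ is the last zero before $\tau_i^+$ and $\theta_i^+$ is the first zero after $\tau_i^+$, so $S\neq 0$ on the whole open interval $(T_i,\theta_i^+)$ while $S_{T_i}=S_{\theta_i^+}=0$; hence $(T_i,\theta_i^+)$ is a connected component of $\{t:S_t\neq 0\}$, i.e. a genuine excursion interval, and it lies in $\Gamma_\delta$ because it reaches $\delta$ at $\tau_i^+$. Since the local time $\ell$ is continuous and constant on each excursion interval, this excursion carries a well-defined local-time level $\lambda_i:=\ell_{T_i}(S)=\ell_{\theta_i^+}(S)$, and by \eqref{eq.tau_l} we obtain the crucial identifications $\tau_{\lambda_i-}=T_i$ and $\tau_{\lambda_i}=\theta_i^+$. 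Conversely, Lemma \ref{lemma.Ud} guarantees that every other excursion occurring inside a $\delta$-excursion belongs to the initial path $g$ with $\max(g)<\delta$, so it does not reach $\delta$ and is not in $\Gamma_\delta$; as the intervals $[\theta_{i-1}^+,\theta_i^+]$ partition $[0,\infty)$, the correspondence sending $i$ to the $\Gamma_\delta$ excursion inside the $i$-th $\delta$-excursion is a bijection onto the excursions of $S$ lying in $\Gamma_\delta$, and these are listed in increasing order of both time and local-time level. Consequently $N_l(\Gamma_\delta)=\#\{i:\lambda_i\leq l\}$.

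Next I would prove the local-time identity $D^\delta_{\tau_l}(S)=N_l(\Gamma_\delta)$ together with finiteness. Using $\theta_i^+=\tau_{\lambda_i}$, we have $D^\delta_{\tau_l}(S)=\#\{i:\tau_{\lambda_i}\leq\tau_l\}$, so it suffices to check that $\tau_{\lambda_i}\leq\tau_l$ is equivalent to $\lambda_i\leq l$. This is where the \cadlag, non-decreasing structure of $\tau$ enters: monotonicity gives $\tau_{\lambda_i}\leq\tau_l$ when $\lambda_i\leq l$, while for $\lambda_i>l$ the fact that $\tau$ jumps at every excursion level (so that $\tau_{\lambda_i}>\tau_{\lambda_i-}\geq\tau_l$) yields the strict inequality $\tau_{\lambda_i}>\tau_l$; combining these, $\#\{i:\tau_{\lambda_i}\leq\tau_l\}=\#\{i:\lambda_i\leq l\}=N_l(\Gamma_\delta)$. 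Finiteness then follows exactly as in Proposition \ref{prop.decomposition}: every excursion counted by $N_l(\Gamma_\delta)$ terminates at $\tau_{\lambda_i}\leq\tau_l$, so all lie in $[0,\tau_l]$, and on this compact interval (finite in the recurrent case of interest) the uniform continuity of $S$ forbids infinitely many oscillations of amplitude $\delta$, whence $D^\delta_{\tau_l}(S)<\infty$.

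Finally I would derive the time identity $D^\delta_t(S)=N_{\ell_t(S)}(\Gamma_\delta)$ from the previous one. Setting $l=\ell_t(S)$, continuity of $\ell$ gives $\tau_{l-}\leq t\leq\tau_l$, whence $D^\delta_t(S)\leq D^\delta_{\tau_l}(S)=N_{\ell_t(S)}(\Gamma_\delta)$; equality holds provided no $\delta$-excursion completes in $(t,\tau_l]$, i.e. provided no $\theta_i^+=\tau_{\lambda_i}$ lies in $(t,\tau_l]$. The analysis of $\tau$ above shows the only candidate is $\lambda_i=l$ with $t<\tau_l$, which occurs precisely when $t$ lies in the \emph{interior} of a currently in-progress $\Gamma_\delta$ excursion interval. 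I expect this boundary bookkeeping to be the main obstacle: one must fix the convention for the not-yet-completed excursion (equivalently, on such interior times read the sum defining $N$ in \eqref{eq.N} as running over levels strictly below $\ell_t$), after which $D^\delta_t(S)$ and $N_{\ell_t(S)}(\Gamma_\delta)$ coincide, while at every zero of $S$ and at every excursion endpoint the two sides already agree unconditionally. The remaining points are routine: measurability is granted by Proposition \ref{prop.pathdecomposition}, and the representation $\{t:S_t\neq 0\}=\cup_{l>0}(\tau_{l-},\tau_l)$ is exactly the matching of jump intervals of $\tau$ with the excursion intervals used above.
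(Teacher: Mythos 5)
Your proposal is correct and follows essentially the same route as the paper: both identify, via the last-exit time before $\tau_i^+$, the unique $\Gamma_\delta$-excursion contained in each $\delta$-excursion, show that the resulting intervals $(\hat{\theta}^+_i,\theta^+_i)$ are in bijection with the jump intervals $(\tau_{l-},\tau_l)$ carrying excursions in $\Gamma_\delta$, and then count. The one point where you go beyond the paper is in flagging the off-by-one at times $t$ lying inside a not-yet-completed $\Gamma_\delta$-excursion, where $\sum_{\lambda\le \ell_t(S)}1_{\Gamma_\delta}(e_\lambda)$ already counts the excursion in progress while $D^\delta_t(S)$ does not; the paper's proof silently equates $\sum_{\tau_l\le t}1_{\Gamma_\delta}(e_l)$ with $\sum_{l\le \ell_t(S)}1_{\Gamma_\delta}(e_l)$ at exactly this step, so your explicit convention there is a legitimate repair rather than a defect.
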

%\eqref{eq.identity} provides the link between  $\Gamma_{\delta}$ and $U_{\delta}$: $	D^{\delta}_t(S) $ counts the number of $\delta$-excursions in $U_{\delta}$ along the path $S$.  
% This section is pathwise results, no need to mention ITO yet.
%On the other hand, $\Gamma_{\delta} \subset {\cal E}_{0}$ and we can apply Ito's representation to  $N_{\ell_t(S)}(\Gamma_{\delta})$. \eqref{eq.identity} bridges $U_{\delta}$ with existing results on excursions. See more discussions in Section \ref{sec:markov}.

\begin{proof}
	The condition $N_l(\Gamma_{\delta})<+\infty$ is a consequence of the continuity of $S$. Recall the level crossing times defined in \eqref{eq.tau}. We will now establish a one-to-one correspondence between excursions $e_l\in\Gamma_{\delta}$ and intervals $(\theta^+_{i-1}, \theta^+_i).$ As in Lemma \ref{lemma.Ud}, define $\hat{\theta}_i^+$ $i\geq 1$ the 'last exit' from zero in the $i$-th $\delta-$excursion:
	\[
		\hat{\theta}^+_{i} := \sup \{t<\tau^+_{i}\colon S_t = 0\}.
	\]
To show that the two sets of intervals $\{(\hat{\theta}^+_{i}, \theta^+_{i})\}_{i\geq 1}$ and $\{(\tau_{l-}, \tau_l)\}_{e_{l}\in \Gamma_{\delta}}$ coincide, we prove the following two claims:
	\begin{itemize}
		\item For each $i\geq 1$ there exists a unique $l_i\geq 0, $ such that $(\hat{\theta}^+_{i}, \theta^+_{i}) =  (\tau_{l_i-}, \tau_{l_i})$.
		
			Indeed, it is easy to see that on the interval $(\hat{\theta}^+_{i}, \theta^+_{i})$,  $S_t  >0$. Furthermore, $$\hat{e}_i(t):=S_{t+ \hat{\theta}^+_{i}} 1_{[0,\, \theta^+_{i}-\hat{\theta}^+_{i}]}\in \Gamma_{\delta},$$ 
since $\hat{e}_i(\tau^+_i-\hat{\theta}^+_i)\geq \delta.$
		In particular  $(\hat{\theta}^+_{i}, \theta^+_{i})$ is an interval of   $\{t>0, \, S_t \neq 0\}$, thus there exists unique $l_i$ such that
		$(\hat{\theta}^+_{i}, \theta^+_{i}) =  (\tau_{l_i-}, \tau_{l_i})$ and $e_{l_i}\in \Gamma_{\delta}$. 
		
		\item  Conversely, for every $l\geq 0$ such that $e_l\in \Gamma_\delta$, there exists a unique index $i(l)\geq 1$ such that $(\tau_{l-}, \tau_{l}) = (\hat{\theta}^+_{i(l)}, \theta^+_{i(l)})$.
		
	Take the largest $i=i(l)\geq 1$ such that $\theta^+_{i-1}\leq \tau_{l-}$ then $\tau_{l-}<\theta_{i}^+$. Since on $(\hat{\theta}^+_{i}, \theta^+_{i})$,  $S_t  >0$, while $S_{\tau_{l-}}  =0,$ we get that $\hat{\theta}^+_{i}\geq \tau_{l-}$. The condition $e_l\in \Gamma_\delta$ implies that $S$ reaches the level $\delta$ in $ (\tau_{l-}, \tau_{l})$,  by definition $\tau^+_i>\hat{\theta}^+_{i}$ is the first such time  after $\theta_{i-1}$, hence $\tau^+_i\in (\tau_{l-}, \tau_{l})$. Since the intervals $(\tau_{l-}, \tau_{l})$ and $(\hat{\theta}^+_{i}, \theta^+_{i})$ intersect, we conclude from the first claim that $(\tau_{l-}, \tau_{l}) = (\hat{\theta}^+_{i(l)}, \theta^+_{i(l)})$ (we also use the fact that the intervals $\{(\tau_{l-}, \tau_{l})\}_{l\geq 0}$ are disjoint).

	\end{itemize}

The correspondence between $\theta_{i}^+,\, i\geq 1$ and $\tau_l,\, e_l\in \Gamma_{\delta}$, yields the result:
\[
	D^\delta_t(S)= \sum_{i\geq 1} 1_{\theta_i^+\leq t}= \sum_{i\geq 1} 1_{\theta^+_{i(l)}\leq t}  =  \sum_{\tau_l\leq t} 1_{\Gamma_\delta}(e_{l}) = \sum_{l\leq l_t(S)} 1_{\Gamma_\delta}(e_{l}) = N_{l_t(S)}(\Gamma_\delta).
\]
\end{proof}
%The identity \eqref{eq.identity} implies in particular that $$

%	\cite{davis2018}, define a discrete local time $L^{\delta \mathbb{Z}}_t(S)$, related to a sequence of Lebesgue partitions associated to a grid $\delta \mathbb{Z}$.
%	
%	
%	They observe that the up-crossing down-crossing numbers $U_t(S, (a, \delta))$, $D_t(S, (a, \delta))$ of the interval $(a, \delta)$ (the latter in our notation is $N^\delta_t(S)$) are related to the local time $L^{\delta \mathbb{Z}}_t(a)$ as follows:
%	$$
%	L_{t}^{\pi_{\delta} Z}(a) / 2=D_t(S, (a, \delta))(\delta-a)+D_t(S, (a, \delta)) a+1_{\left[S_{t_{j}}, S_{t}\right)}(a)\left|S_{t}-a\right|.
%	$$
%	Where $t_j$ is the last partition point in the interval $[0, t].$ In particular, they conclude that
%	\[
%	\left|L_{t}^{\pi_{\delta} Z}(S) / 2 - \delta\, D_t(S, (a, \delta))  \right| \leq 2 \delta,
%	\]
%	thus in our notation
%	\[
%	\left|L_{t}^{\pi_{\delta} Z}(S) / 2 - \delta\, N_{t}^{\delta}(S)  \right| \leq 2 \delta.
%	\]

\subsection{Behaviour of level-crossings as $\delta\to 0$}
\label{sec:asymototics}
The behavior of the above quantities as $\delta \to 0$ is determined by the `roughness' of the path. 
When $\delta$ is small, we account for the fact that trading takes place only at prices that are integer multiples of a 'tick', i.e. only at times when $S$ takes such values.

Let $\delta_n=2^{-n}$ and  introduce the  partition $\pi_{n}$ defined by the hitting  times of the grid $\delta_n \mathbb{Z}$:
	\ba\label{eq.lebesguepartition}
	t_{0}^{n}:=0, \quad t_{k+1}^{n}:=\inf \left\{t \geq t_k^{n}\colon S_{t} \in  \delta_n \mathbb{Z}\backslash \{S_{t_{k}^{n}} \} \right\}.
	\ea
Then $\sup_{\pi_n}|S_{t^n_{k+1}}-S_{t^n_k}|\to 0$ as $n\to\infty$.
We denote $\pi = (\pi_n)_{n\geq 1}$ {with $\pi_n:=\{t_k^n: k \ge 0\}$}. %$|\pi_n|=\sup_{k}|t^n_{k+1}-t^n_k|\to 0$ as $n\to\infty$.
Following \cite{contperkowski}, we will say that $S\in C^0([0,T],\mathbb{R})$ has $p$-th order variation  along $\pi$ if there exists $[S]^p_\pi\in C^0([0,T],\mathbb{R}_+)$ such that
$$\sum_{\pi_n}|S_{t^n_{k+1}\wedge t}-S_{t^n_k\wedge t}|^p\quad \mathop{\to}^{n \rightarrow \infty} \quad[S]_\pi^p(t).$$
{Note that this limit is uniform in $t\in[0,T]$, as pointed out by \cite{contperkowski}.}
%Denote by $V_p(\pi)$ the set  of continuous paths satisfying this property.We shall assume $S$ satisfies this property and choose \ba p=\inf \{q\geq 1, S\in V_q(\pi), [S]^q(T)>0 \} \ea
The smallest  $p\geq 1$ for which $[S]^p_\pi \neq 0$ then gives an index of `roughness' for $S$ along $\pi$. For example for Brownian paths $p=2$ while for fractional Brownian motion with Hurst exponent $H$, $p=1/H$ \cite{biagini2008}.

For paths with non-zero $p$-th variation, the number of down-crossings for levels close to zero is related to a slightly different notion of local time, defined in terms of a {\it weighted} occupation measure, weighted by  the p-th order variation \cite{contperkowski,contjin2024}:
\begin{definition}[Local time of order $p$ \cite{contperkowski}]
	 	Let $p\geq 1$ and $q\geq 1$. A continuous path $S\in C^0([0, T], \mathbb{R})$ has ($L^q$-)local time of order $p$ along a  sequence of partitions $\pi = (\pi_n)_{n\geq 1}$ of  $[0, T]$ if, for any $t\in [0,T]$, the sequence of functions
	 	$$
	 L^{\pi_n, p}_t(S, .): x\in \mathbb{R}\mapsto	L^{\pi_n, p}_t(S, x):= \sum_{t^n_{j} \in \pi_n} 1_{\left[S_{t^n_{j} \wedge t}, S_{t^n_{j+1} \wedge t}\right)}(x)\left|S_{t^n_{j+1} \wedge t}-x\right|^{p-1} %\quad\in L^q([0,T])
	 	$$
	 	 	converges, {as $n$ goes to infinity}, in $L^q(\mathbb{R})$ to a limit $L^{\pi,\, p}_t(S,\cdot)\in L^q(\mathbb{R})$  and the map $t\in [0,T]\mapsto L^{\pi,\, p}_t(S,x)\in L^q(\mathbb{R})$ is weakly continuous. We call $L^{\pi,\, p}(S,x)$
	 	the local time of order $p$ of $S$ at level $x$.
	 \end{definition}
	 $L^{\pi,\, p}_t(S,x)$ measures the rate at which the path $S$ accumulates p-th order variation around level $x$. Note that the local time of order $p$ is non-zero only if $S$ has non-zero p-th order variation along $\pi$ i.e. $[S]^p_\pi>0$.
	If the convergence is uniform in  $(t, x)\in [0,T]\times \mathbb{R}$, and the mapping $(x, t)\mapsto L^{\pi, p}_t(S, x)$ is continuous we call it the continuous local time	of $S$ \cite{kim2019local}.

	Note that  in the case of $p=2$ the definitions  in \cite{bertoin1987}  and \cite{contperkowski, kim2019local} differ by a factor of $2$; here we use the latter notation. In the case $p=2$ we will omit the index $p$ in the notation; $L^{\pi}:= L^{\pi,\, 2}$.
	
{The quantity $D^\delta_t(S)= \sum_{i\geq 1} 1_{\theta_i^+\leq t ,}$ is the number of down-crossings of the interval $[0, \delta_n]$, as defined in \eqref{eq.upcrossings}.
 It is also equal to the total number of $\delta$ excursions and the number of transactions of strategy $\phi^+$ defined in \eqref{eq.phi+}. 
Similarly we define, as  in \cite{contperkowski}, the number of up-crossings of  $[0, \delta_n]$
$$U^\delta_t(S)= \sum_{i\geq 1} 1_{\tau_i^+\leq t }.$$
Following a reasoning similar to \cite[Lemma 3.4]{contperkowski},
 for any $x\in[0,\delta_n]$ the summands in the formula for $L^{\pi_n, p}_t(S, x)$ are either $|x|^{p-1}$ for each down crossing or $|\delta_n -x|^{p-1}$ for up crossing of the interval, and (additionally)  a final term $|x-S_t| = O(\delta_n^{p-1})$ in the case $S_t\in[0,\delta_n]$. Thus
  	$$ 
	L^{\pi_{n}, p}_t(S,x)=D^{{\delta_n}}_t(S)|x|^{p-1}+U^{{\delta_n}}_t(S) |\delta_n -x|^{p-1}+ O(\delta_n^{p-1}), \forall x\in[0,\delta_n].
	$$ }
 % Following the arguments in %\cite{bertoin1987,elkaroui1978} and
 % \cite[Lemma 3.4]{contperkowski}, one can establish a relation between the  down-crossings and up-crossings $D^{\delta_n}_t(S), U^{\delta_n}_t(S)$ of the interval $ [0,\delta_n]$: 
 % for $x\in [0,\delta_n],$
 % 	$$ 
	% L^{\pi_{n}, p}_t(S,x)=D^{{\delta_n}}_t(S)|x|^{p-1}+U^{{\delta_n}}_t(S) |\delta_n -x|^{p-1}+ O(\delta_n^{p-1}).
	% $$ 
 %for $x\in [0,\delta_n],$
%	$$ 
%	L^{\pi_{n}, p}_t(S)=D^{\delta_n}_t(S)|x|^{p-1}+U^{\delta_n}_t(S) |\delta_n -x|^{p-1}+ O(\delta_n^{p-1}).
%	$$ 
	Since the numbers  $D^{{\delta_n}}_t(S), U^{{\delta_n}}_t(S)$ can differ at most by one, we  obtain that  
	$$
		L^{\pi_n}_t(S) = L^{\pi_n}_t(S, 0)=D^{\delta_n}_t(S) \delta_n^{p-1}+ O(\delta_n^{p-1}).
	$$ 
	 If $S$ has a continuous local time $L^{\pi, p}(S, \cdot)$ along the sequence of Lebesgue partitions $\pi$, we conclude from above that
	\[
		\lim_{n\to \infty}|\delta_n|^{p-1} D^{\delta_{n}}_t(S) = L^{\pi, p}_t(S).
	\]
	%\begin{assumption}	\end{assumption}
 The following proposition summarizes the behavior of the number of level crossings $D^\delta_t(S)$ (representing the number of trades) and the realized profit  $\delta D^\delta_t(S)$ as $\delta$ decreases to zero:
	\begin{proposition} Let $\delta_n=2^{-n}$ and $p\geq 1$. Assume $S\in C^0([0,T],\mathbb{R})$ has a strictly positive local time $L^{\pi,p}_t>0$ of order $p$ at zero along the sequence of partitions $(\pi_n)_{n\geq 1}$ defined by \eqref{eq.lebesguepartition}. Then for any $t\in (0,T]$,
		\begin{enumerate}
		    \item[(i)] if $1\leq  p<2$ then \ \ $\delta_n\  D^{\delta_n}_t(S)  \mathop{\to} 0$ as ${n\to \infty}$. 
	    \item[(ii)] if $p>2$ then $$\delta_n\  D^{\delta_n}_t \mathop{\longrightarrow}^{n\to \infty} \infty,\qquad {\rm and}\quad D^{\delta_n}_t(S) \mathop{\sim}^{\delta_n\to 0}\  %\delta_n^{1-p} 
	    \frac{L^{\pi,p}_t(S)}{\delta_n^{p-1}}.$$
	    \item[(iii)] if $p=2$ then $$\delta_n\  D^{\delta_n}_t \mathop{\longrightarrow}^{n\to \infty}  L^{\pi,2}_t(S),\qquad {\rm i.e.}\quad D^{\delta_n}_t(S) \mathop{\sim}^{\delta_n\to 0}\  \frac{L^{\pi,2}_t(S)}{\delta_n}.$$
		\end{enumerate}\label{prop.localtime}
	\end{proposition}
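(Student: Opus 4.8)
The plan is to deduce all three cases directly from the asymptotic relation established just above the statement, namely
\[
\lim_{n\to\infty} \delta_n^{\,p-1}\, D^{\delta_n}_t(S) = L^{\pi,p}_t(S),
\]
which holds for every fixed $t\in(0,T]$ under the stated hypothesis. Since by assumption $L^{\pi,p}_t(S)>0$, this limit is a strictly positive finite constant, and it already yields the equivalences $D^{\delta_n}_t(S)\sim L^{\pi,p}_t(S)/\delta_n^{\,p-1}$ claimed in parts (ii) and (iii): indeed $\delta_n^{\,p-1} D^{\delta_n}_t(S)$ converges to a nonzero limit, so dividing by $L^{\pi,p}_t(S)$ shows that the ratio of $D^{\delta_n}_t(S)$ to $L^{\pi,p}_t(S)/\delta_n^{\,p-1}$ tends to $1$.

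The remaining point is the behaviour of the product $\delta_n\, D^{\delta_n}_t(S)$, which I would obtain by factoring out the convergent quantity:
\[
\delta_n\, D^{\delta_n}_t(S) = \delta_n^{\,2-p}\cdot\bigl(\delta_n^{\,p-1} D^{\delta_n}_t(S)\bigr).
\]
Because the second factor tends to the positive constant $L^{\pi,p}_t(S)$, the asymptotics of the product are governed entirely by the deterministic prefactor $\delta_n^{\,2-p}=2^{-n(2-p)}$ as $n\to\infty$. I would then split into the three regimes: if $1\le p<2$ the exponent $2-p$ is positive, so $\delta_n^{\,2-p}\to 0$ and the product tends to $0$, giving (i); if $p=2$ the prefactor equals $1$ and the product converges to $L^{\pi,2}_t(S)$, giving (iii); if $p>2$ then $2-p<0$ and $\delta_n^{\,2-p}\to+\infty$, so the product diverges to $+\infty$, giving (ii).

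Since the substantive analytic work—relating the down-crossing count to the $p$-th order local time and passing to the limit—has already been carried out to obtain the displayed relation, no genuine obstacle remains; the argument is a short limit computation. The one place where care is needed is case (ii), where a sequence diverging to $+\infty$ multiplies a sequence converging to a finite limit: it is precisely the strict positivity of $L^{\pi,p}_t(S)$ assumed in the hypothesis that rules out an indeterminate form and forces the product to diverge rather than remain bounded. Throughout, $t$ is fixed and positive, so the convergence $\delta_n^{\,p-1}D^{\delta_n}_t(S)\to L^{\pi,p}_t(S)$ may be invoked pointwise in $t$, and the three claimed limits follow.
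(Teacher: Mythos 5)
Your proposal is correct and follows essentially the same route as the paper: the paper establishes the relation $\lim_{n\to\infty}\delta_n^{p-1}D^{\delta_n}_t(S)=L^{\pi,p}_t(S)$ in the paragraph immediately preceding the proposition and then treats the three cases as the same elementary consequence of the sign of $2-p$, exactly as you do by writing $\delta_n D^{\delta_n}_t(S)=\delta_n^{2-p}\bigl(\delta_n^{p-1}D^{\delta_n}_t(S)\bigr)$. Your observation that strict positivity of $L^{\pi,p}_t(S)$ is what prevents an indeterminate form in case (ii) is the right point of care.
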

In particular, when $p>2$ the threshold $\delta$ should be chosen as small as possible, while for $p\leq 2$ there is an optimal threshold $\delta^*(S)>0$ which maximizes the realized profit $\delta D^{\delta}_T(S)$.

{When transaction costs are $c>0$ per share, $\delta\geq c$ so the above asymptotic regime will be relevant only when the transaction cost per share is small i.e. for liquid instruments.
}

	    The assumptions of Proposition \ref{prop.localtime} are satisfied by typical sample paths of many classes of stochastic processes.
	    Typical paths of semimartingales correspond to (iii), while paths of 'rough' processes such as Fractional Brownian motion with Hurst exponent $H<1/2$ correspond to (ii). Below we provide a few examples.
	   
     \begin{example}[Continuous semimartingales]{\em
	 Let $S=M+A$ where $M$ is a continuous martingale and $A$ is a continuous process with bounded variation $\int_0^T|dA_t|$ on $[0,T]$. Denote by $[S]=[M]$ the quadratic variation process of $S$.
	 Then $S$ admits a local time of order $p=2$, which corresponds $1/2$ of the 'semimartingale local time' of $S$ at $0$:
		\[
		L^{\pi,2}_t(S)  =  \lim_{\varepsilon \to 0} \frac{1}{4\varepsilon} \int_0^T 1_{[ -\varepsilon , \varepsilon]}(S_t)\, d[S]_t.
	\]
{For any continuous semimartingale, the following (one-sided) version holds:
\[
    L^{\pi,2}_t(S)  =  \lim_{\varepsilon \to 0} \frac{1}{2\varepsilon} \int_0^T 1_{[ 0 , \varepsilon]}(S_t)\, d[S]_t.
\]
}
Furthermore, as shown by El Karoui \cite{elkaroui1978},  if for some $q\geq 1$ we have $$ \mathbb{E}\left[ [M]_T^{q/2}+  \left(\int_0^T|dA_t|\right)^q\right]< \infty,  $$ then  $t\mapsto \delta\  D_{t}^{\delta }$ is uniformly approximated   in $\mathbb{L}^q$ by $L^{\pi,2}_t(S)$ as $\delta \to 0$:
	$$
	\mathbb{E}\left[ \sup _{0\leq t \leq T}\left|\delta\  D_{t}^{\delta }(S)- L^{\pi,2}_t(S)\right|^q\right]\quad \mathop{\to}^{\delta \rightarrow 0}\quad 0,
	$$
}
	    \end{example}
	    
	     \begin{example}[Fractional Brownian motion]{\rm Let $B^H$ be a fractional Brownian motion  with Hurst parameter $H\in (0, 1)$.
	    {The  
almost-sure convergence of the $p$-th variation  of $B^H$  along the the sequence of  partitions $\pi$  defined in \eqref{eq.lebesguepartition} for $p=1/H$
has been shown by Das et al. \cite{das2023level}, who also discuss   excursions of this process.}   
        Thus $B^H$  almost surely has a continuous local time $L^{\pi, 1/H}(B^H, \cdot)$ of order $p=1/H$ along the Lebesgue partitions \eqref{eq.lebesguepartition}, and
	$$L_t^{\pi, 1/H}(B^H, x)= c_H \ell_t(B^H, x),$$ where $\ell_t(B^H, \cdot)$ is the occupation time density of $B^H$ and $c_H$ a constant (see also \cite{kim2019local}). 
    Denoting $\ell_t(B^H)=\ell_t(B^H, 0)$, this implies
	\[
	    \lim_{n\to \infty}|\delta_n|^{\frac{1-H}{H}} D^{\delta_n}_t(B^H) = \ell_t(B^H)\underbrace{\mathbb{E}\left[|B^H_1|^{\frac{1}{H}}\right]}_{c_H},\quad {\rm so}\quad 
D^{\delta_n}_t(B^H)\mathop{\sim}^{\delta_n\to 0}\frac{c_H\  \ell_t(B^H)}{|\delta_n|^{\frac{1-H}{H}}}.	\]  

} 
	    \end{example}

\begin{example}[Fractional Ornstein-Uhlenbeck process]
The     fractional Ornstein-Uhlenbeck process \cite{cheridito2003} is a Gaussian process solution of a Langevin equation driven by a fractional Brownian motion:
\begin{eqnarray}
	 dS_t= -\lambda S_t dt + \gamma dB^H_t, \label{eq.FOU}
	 \end{eqnarray}
	 where $B^H$ is a fractional Brownian motion with Hurst exponent $H$.
     The solution of this equation is given by
     $$ S_t= e^{-\lambda t}\left(S_0+ \gamma \int_0^t e^{\lambda u}dB^H_u\right).$$
	$S$ is an ergodic process which exhibits long-range dependence \cite{cheridito2003}.
   If $E[S_0]=0,$ the process is recurrent at zero.
     
	Figure \ref{fig.smalldelta}  shows, as a function of the threshold $\delta$,  the number of $\delta-$excursions estimated from values of $S_t$  on a discrete grid of $N=28,800$ points (which corresponds to the number of seconds in one trading day). 
The empirical estimator closely follows the asymptotics described in Proposition \ref{prop.localtime}, suggesting that this asymptotic regime is indeed a relevant description of excursions at such frequencies.

\begin{figure}[H]
  \centering
  \begin{subfigure}[b]{0.32\textwidth}
    \centering
    \includegraphics[width=\textwidth]{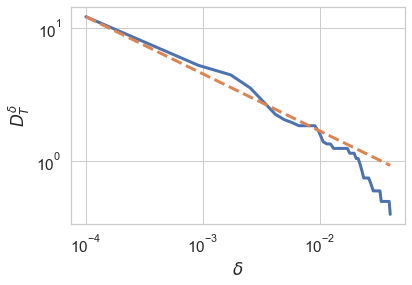}
    \caption{ $H=0.7$.}
  \end{subfigure}
  \begin{subfigure}[b]{0.32\textwidth}
    \centering
    \includegraphics[width=\textwidth]{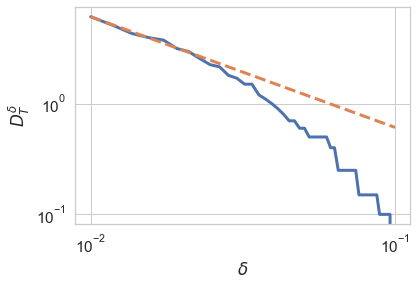}
    \caption{ $H=0.5$.}
  \end{subfigure}
    \begin{subfigure}[b]{0.32\textwidth}
    \centering
    \includegraphics[width=\textwidth]{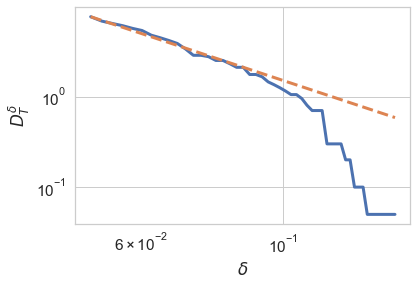}
    \caption{$H=0.3$.}
  \end{subfigure}
  \caption{Behavior of $D^{\delta}_T$ when $\delta\to 0$ for a fractional Ornstein-Uhlenbeck process \eqref{eq.FOU} with $\lambda=5$ and $\gamma=0.1$. Dotted line: asymptotic behavior $\delta^{\frac{H-1}{H}}$ described in Proposition \ref{prop.localtime}.}\label{fig.smalldelta}
\end{figure}
Many of the properties of this example may be extended to other stochastic differential equations driven by fractional Brownian motion \cite{hairer2005}.
\end{example}

%{\bf Add discussion on choice of $\delta$} scaling of $\tilde{\ell}_{t}(S)$ under the transformation $\sigma(t)^2=d[S]/dt \to \lambda \sigma(t)^2   $

\section{Application to pairs trading}\label{sec:behavior}

 Pairs trading \cite{pairs,gatev2006} is a trading strategy based on identifying a stationary linear combination of two stock prices  and using this linear combination as a trading signal for generating buy/sell transactions in the pair. In most applications the signal is then modeled as an AR(1)/Ornstein-Uhlenbeck process \cite{leung2015,pairs}. However, such model-based methods  usually rely on  strong (and often unrealistic) model assumptions hence may suffer from potential financial losses.

 We illustrate the limitation of  model-based methods with the empirical distributions of  two pairs trading signals, both of which are recognized to have co-movements.

 The first pairs trading signal is constructed with  CocaCola (KO) and PepsiCola (PEP) and the second one is constructed with two ETFs ProShares Short S\&P500 (SH) and ProShares UltraShort S\&P500 (SDS). %Both pairs are  recognized to have co-movements -- KO and PEP are stocks of two companies mainly producing cola and SH and SDS are both EFTs for shorting S\&P500. 
 In both examples, as we shall see below, we observe poor performance of the fitted Ornstein-Uhlenbeck process by comparing  the empirical distributions with the real pairs trading signals.

We use second-by-second NYSE price records of KO, PEP, SH, SDS shares  during trading hours  09:30AM-4:00PM  for the period 07/01/2013 - 07/01/2020 to construct pair-trading signals. Denote $P_1(t)$ (resp. $P_2(t)$)  the mid-price of the first stock (resp. the second stock) of the pairs trading strategy. The signal is constructed as $S_{t}:=P_1(t) - a_t \,P_2(t)+b_t$ where the coefficients $a_t$ and $b_t$ piece-wise constant, updated on each trading day by an ordinary least square regression of $P_1(\cdot)$ on $P_2(\cdot)$ over the previous $5$ days. For the KO and PEP pair, we 
regress KO (i.e., the first stock) with respect to PEP (i.e., the second stock) to construct the signal. Similarly, we treat SH as the first stock and SDS as the second stock to construct the corresponding  signal.

%As an example, we use second-by-second NYSE price records  of KO and PEP shares  during trading hours  09:30AM-4:00PM  for the period 07/01/2013 - 07/01/2020 to construct a pair-trading signal.Denote ${\rm PEP}(t)$ (resp. ${\rm CO}(t)$)  the mid-price of Pepsi (resp. Coca-Cola). The signal is constructed as $S_{t}:={\rm CO}(t) - a_t \,{\rm PEP}(t)+b_t$ where the coefficients $a_t$ and $b_t$ piece-wise constant, updated on each trading day by an ordinary least square regression of ${\rm PEP}(\cdot)$ on ${\rm CO}(\cdot)$ over the previous $5$ days. The signal  with SH and SDS is constructed in the same way.

\iffalse
\paragraph{Auto-correlation.} The auto-correlation of the signals are showed in Figure \ref{fig:autocorrelatioon}. {\color{red}[Discuss: should we keep this?]}

\begin{figure}[H]
  \centering
  \begin{subfigure}[t]{0.48\textwidth}
    \includegraphics[width=\textwidth]{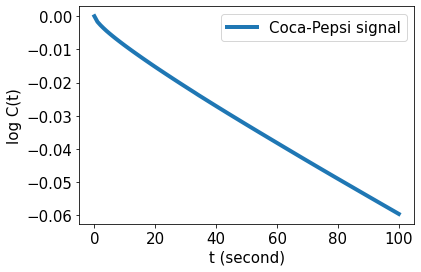}
    \caption{\label{fig:acf_ko_pep}KO-PEP.}
  \end{subfigure}
  \begin{subfigure}[t]{0.48\textwidth}
    \includegraphics[width=\textwidth]{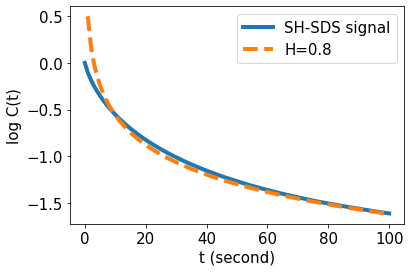}
    \caption{\label{fig:acf_sh_sds}SH-SDS.}
  \end{subfigure}
  \caption{\label{fig:autocorrelatioon}Auto-correlation for  pairs trading signals.}
\end{figure}
\fi

\paragraph{Number of level crossings.}
To assess the roughness of the signal $S$, we
analyze  the number of level crossings as a function of $\delta$ and apply Proposition \ref{prop.localtime}. Recall that $\log(D_t^{\delta}(S)) \sim -(p-1)\log(\delta)+ \text{constant}$ as $\delta\to 0$, where $p$ measures the roughness of the path. We estimate the  exponent $p$ by linear regression of $\log(D_t^{\delta}(S))$ on $\log(\delta)$. 
As shown in Figure \ref{fig:pepsi_cocacola_crossings}, the estimated exponent of the trading signal $S$ with KO and PEP is around $p=1.85$, which implies that the path is slightly smoother than the Brownian motion (for which $p=2$). On the other hand,  \ref{fig:sh_sds_crossings} shows that the estimated exponent of the trading signal $S$ with SH and SDS is around $p=1.25$ implying that $S$ is much smoother than Brownian motion.

There are two types of crossings with different time-scales along the path: crossings due to the mean reverting phenomenon on a longer time-scale and crossings with small magnitudes due to the roughness of the path once the signals revert to level $0$. The crossings of the first type could be captured by all $\delta$ with appropriate choices. Crossings of the second type show up as $\delta\rightarrow 0$. For KO-PEP signals, empirical estimates seem to indicate a non-zero limit of the realized profit $\delta D_T^{\delta}$ as $\delta\rightarrow 0$. This is consistent with the result in Proposition \ref{prop.localtime} for $p=2$, indicating that it is not more profitable to use smaller thresholds $\delta$ for trading this pair (see Figure  \ref{fig:pepsi_cocacola_realized_profit}). The realized profit is maximized at $\delta\simeq 3.2 \sigma$. %{\color{red}which is quite different from the OU model, where realized profit is maximized for $\delta< \sigma$.} 
For SH-SDS signals, empirical estimates imply a  limit of the realized profit $\delta D_T^{\delta}$ at zero as $\delta\rightarrow 0$. This is consistent with the result in Proposition \ref{prop.localtime} for $p<2$. Similar to the previous case, this  indicates that it is also not profitable to use  small thresholds $\delta$ for trading this pair (see Figure  \ref{fig:sh_sds_realized_profit}). The realized profit is maximized at $\delta\simeq 0.3 \sigma$. %, which is quite different from the OU model, where realized profit is maximized for $\delta< \sigma$.

Comparing the results in Figures \ref{fig:pepsi_cocacola_crossings} and \ref{fig:sh_sds_crossings}, we see that different pairs trading signals have very different behaviors in terms of the roughness of the path and the optimal thresholds. This is different from the results assuming Ornstein-Uhlenbeck models, where realized profit is always maximized at $\delta \lesssim \sigma$ \cite{leung2015optimal}.

\begin{figure}[H]
  \centering
  \begin{subfigure}[t]{0.48\textwidth}
    \includegraphics[width=\textwidth]{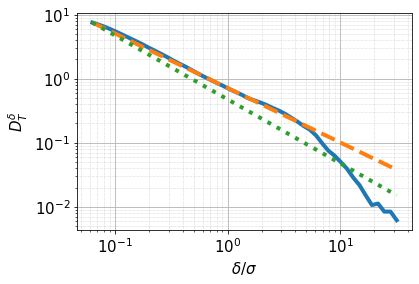}
    \caption{\label{fig:pepsi_cocacola_crossings}Number of level crossings $D_T^{\delta}$. (Orange dashed line:  $\delta^{p-1}$ with $p=1.85$; Green dotted line: asymptotics  for $p=2.0$; $T=1$ day).}
  \end{subfigure}
  \begin{subfigure}[t]{0.48\textwidth}
    \includegraphics[width=\textwidth]{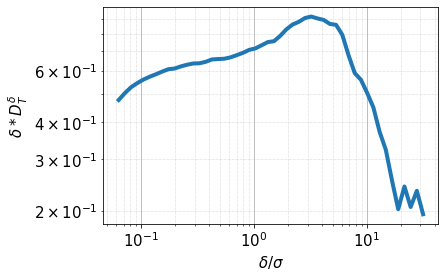}
    \caption{\label{fig:pepsi_cocacola_realized_profit}Realized profit $\delta D_T^{\delta}$ ($T=1$ day).}
  \end{subfigure}
  \caption{Number of crossings $D_T^{\delta}$ and realized profit $\delta D_T^{\delta}$ as a function of threshold $\delta$ for Coca-Pepsi pairs trading signal.}
\end{figure}

\begin{figure}[H]
  \centering
  \begin{subfigure}[t]{0.48\textwidth}
    \includegraphics[width=\textwidth]{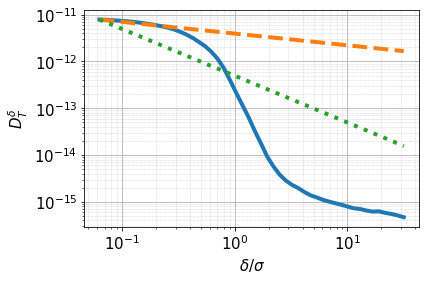}
    \caption{\label{fig:sh_sds_crossings}Number of level crossings $D_T^{\delta}$. (Orange dashed line:  $\delta^{p-1}$ with $p=1.25$; Green dotted line: asymptotics  for $p=2.0$; $T=1$ day).}
  \end{subfigure}
  \begin{subfigure}[t]{0.48\textwidth}
    \includegraphics[width=\textwidth]{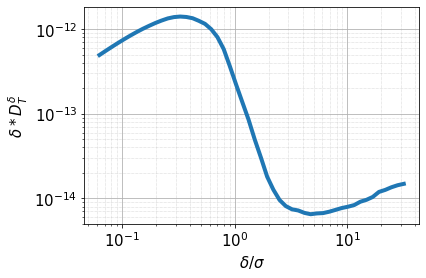}
    \caption{\label{fig:sh_sds_realized_profit}Realized profit $\delta D_T^{\delta}$ ($T=1$ day).}
  \end{subfigure}
  \caption{Number of crossings $D_T^{\delta}$ and realized profit $\delta D_T^{\delta}$ as a function of threshold $\delta$ for SH-SDS pairs trading signal.}
\end{figure}

\paragraph{Comparison with  Ornstein-Uhlenbeck models}
The most widely used model for pairs trading is the Ornstein-Uhlenbeck (OU) model \cite{gatev2006,leung2015}, mainly due to its mean-reversion properties and analytical tractability:
\begin{eqnarray}
    d S_t =  \gamma d B_t +\alpha(\mu- S_t) dt. \label{eq.OU}
\end{eqnarray}
In the stationary case, the signal has a standard deviation $\sigma= \gamma/\sqrt{2\alpha}$.

Recall that the trading strategy $\phi^+$  consists in shorting the pair when $S_t$ crosses the threshold $\delta$ from below and unwinds the position when $S_t$ returns to  $0$. Denote by $\sigma$ the (sample) standard deviation of $S_t$. In Figure~\ref{fig:pepsi_cocacola_onestd}, we provide the empirical distributions of the durations for  waiting period $(\tau_k^+-\theta_{k-1}^+)$, holding period $(\theta^+_k-\tau^+_k)$ and the maximum loss during the holding period,  when $\delta=\sigma$ is the intraday standard deviation of the signal,  a common choice for mean-reversion strategies \cite{pairs}.

As seen from the semi-logarithmic plots in Figures \ref{fig:waiting_duration} and \ref{fig:holding_duration}, the durations of the holding period and the waiting period are approximately exponentially distributed.    The maximum loss has a Pareto tail with exponent $k=1$, which is very heavy tailed and indicates infinite mean and variance, as shown by the log-log plot in Figure \ref{fig:holding_height}.
This combination of a Pareto tail for the excursion height and an exponential duration for $\delta-$excursions corresponds neither to the Brownian case nor to the case of the Ornstein-Uhlenbeck  process.

\iffalse
{\color{red}
	\begin{eqnarray}\label{eq:ou}
	d S_t =  \gamma d B_t +\alpha(\mu- S_t) dt,
	\qquad {\rm i.e,}\quad
	S_t = e^{-\alpha t} \left( S_0 + \gamma \int_0^t e^{\alpha s}dB_s\right) +\mu (1-e^{-\alpha t}),
	\end{eqnarray}
	where $\alpha \in \mathbb{R}$  and $\gamma\in \mathbb{R}$ and $B$ is a standard Brownian motion.}
\fi

For comparison, we estimate the OU model \eqref{eq.OU} using a method of moments, leading the following   parameter estimates   for the KO-PEP pair (time is measured in seconds):
\begin{eqnarray}\nonumber
\label{eq:estimated_parameters}
\widehat{\alpha} = 0.00064,\quad \widehat{\mu}=-8.634\times 10^{-5},\,\, \textit{\rm and}\,\,\widehat{\gamma} = 0.0014.
%\widehat{\alpha} = 1.3 \times 10^{-3},\quad \widehat{\mu}=-0.02,\,\, \textit{\rm and}\,\,\widehat{\gamma} = 4.8\times  10^{-3}.
\end{eqnarray}
 \begin{figure}[H]
  \centering
  \begin{subfigure}[t]{0.48\textwidth}
    \includegraphics[width=\textwidth]{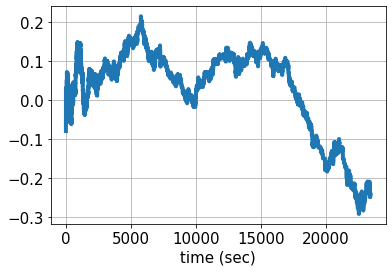}
    \caption{Coca-Pepsi pair trading signal: 07/16/2013.}
  \end{subfigure}
  \begin{subfigure}[t]{0.48\textwidth}
    \includegraphics[width=\textwidth]{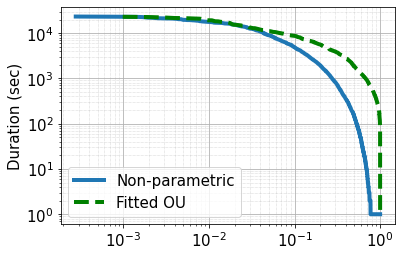}
    \caption{\label{fig:waiting_duration} Rank-frequency plot for the waiting period $(\tau_k^+-\theta_{k-1}^+)$, semi-logarithmic scale.}
  \end{subfigure}
  \begin{subfigure}[t]{0.48\textwidth}
    \includegraphics[width=\textwidth]{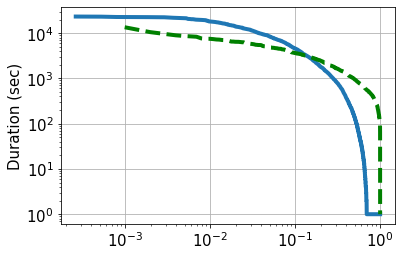}
    \caption{\label{fig:holding_duration}Rank-frequency plot for   the holding period $(\theta^+_k-\tau^+_k)$, semi-logarithmic scale.}
  \end{subfigure}
  \begin{subfigure}[t]{0.48\textwidth}
    \includegraphics[width=\textwidth]{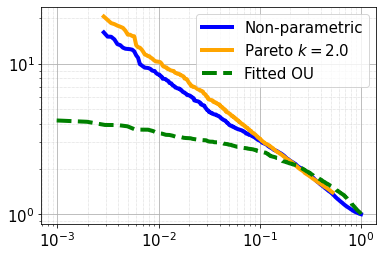}
    \caption{\label{fig:holding_height} Rank-frequency plot for maximum loss during the holding period  (log-log scale). Orange dotted line: Pareto distribution   with exponent $k={2.0}$.}
  \end{subfigure}
  \caption{\label{fig:pepsi_cocacola_onestd}KO-PEP pair trading signal ${\rm KO}(t)-a_t{\rm PEP}(t)+b_t$ at one-second frequency, for $\delta=\sigma$ (2007-2020). Data (blue) vs O-U model (dotted).}
\end{figure}
The corresponding model-based distributions for  the duration of the holding period, the waiting period and the worst loss during the holding period are displayed (green dotted lines) alongside the empirical distributions of these quantities in Figure \ref{fig:pepsi_cocacola_onestd}. The discrepancy between the green dotted lines and the blue solid lines in Figures \ref{fig:waiting_duration}, \ref{fig:holding_duration}, and \ref{fig:holding_height} illustrates that the distributions computed using the Ornstein-Uhlenbeck model give a poor approximation of the corresponding empirical distributions, leading to an inaccurate representation of the risk and return profile of the strategy. In particular, the fitted OU process underestimates duration of the holding period (see Figure \ref{fig:holding_duration}) and the maximum loss during the holding period (see Figure \ref{fig:holding_height}).
This is a strong indication of the risk of model mis-specification in such mean-reversion strategies.

For the SH-SDS pair, we have the following estimates of the Ornstein-Uhlenbeck process:
\begin{eqnarray}\nonumber
\label{eq:estimated_parameters2}
\widehat{\alpha} =0.08,\quad \widehat{\mu}=-8.634 \times 10^{-5},\,\, \textit{\rm and}\,\,\widehat{\gamma} = 0.0014.
\end{eqnarray}
 The discrepancy between the green dotted lines and the blue solid lines in Figures \ref{fig:waiting_duration2}, \ref{fig:holding_duration2}, and \ref{fig:holding_height2} illustrates that the distributions computed using the Ornstein-Uhlenbeck model give a poor approximation of all three empirical distributions including the duration of the holding period, the waiting period and the worst loss during the holding period, leading to an inaccurate representation of the risk and return profile of the strategy.  This is yet another indication of the risk of model mis-specification in such mean-reversion strategies.

\begin{figure}[H]
  \centering
  \begin{subfigure}[t]{0.48\textwidth}
    \includegraphics[width=\textwidth]{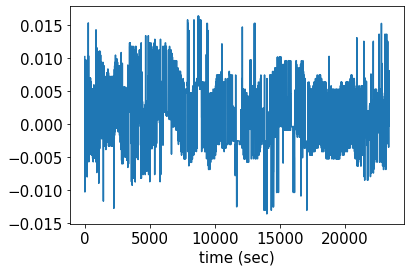}
    \caption{SH-SDS pair trading signal: 07/16/2013.}
  \end{subfigure}
  \begin{subfigure}[t]{0.48\textwidth}
    \includegraphics[width=\textwidth]{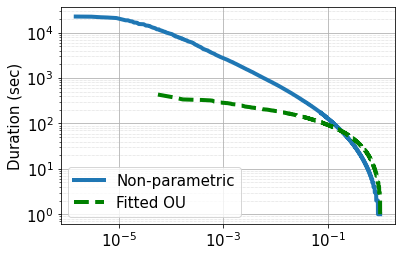}
    \caption{\label{fig:waiting_duration2} Rank-frequency plot for the waiting period $(\tau_k^+-\theta_{k-1}^+)$, semi-logarithmic scale.}
  \end{subfigure}
  \begin{subfigure}[t]{0.48\textwidth}
    \includegraphics[width=\textwidth]{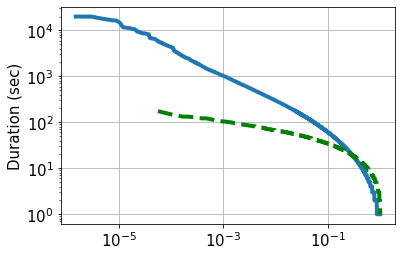}
    \caption{\label{fig:holding_duration2}Rank-frequency plot for   the holding period $(\theta^+_k-\tau^+_k)$, semi-logarithmic scale.}
  \end{subfigure}
  \begin{subfigure}[t]{0.48\textwidth}
    \includegraphics[width=\textwidth]{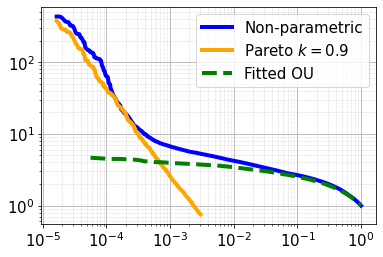}
    \caption{\label{fig:holding_height2} Rank-frequency plot for maximum loss during the holding period  (log-log scale). Orange dotted line: Pareto distribution with exponent $k={0.9}$.}
  \end{subfigure}
  \caption{\label{fig:pepsi_cocacola_onestd2}SH-SDS pair trading signal ${\rm SH}(t)-a_t{\rm SDS}(t)+b_t$ at one-second frequency, for $\delta=\sigma$ (2007-2020). Data (blue) vs O-U model (dotted).}
\end{figure}

It is worth pointing out that the distribution of the rank-frequency plot has two regimes (Figure \ref{fig:holding_height2}):  excursions of small amplitude, associated with  small losses, behave as in the Ornstein-Uhlenbeck  model  whereas excursions of large amplitudes, associated  with  large losses, exhibit a heavy tail with an amplitude which follows a Pareto distribution with exponent $k=0.9$, which resembles the case of Brownian excursions \cite{ananova2021}.
The excursions of these trading signals thus seem to interpolate between the OU model for small amplitudes and the (driftless) Brownian case for large amplitudes, as   observed in \cite{ananova2021}. 

 \section{Model-free scenario simulation}\label{sec:modelfreesimulation}
The above examples  illustrate that, to correctly reflect the risk and return of dynamic trading strategies, a model needs to adequately reflect the   excursion properties of trading signals.
We now show how
Proposition
\ref{prop.decomposition} may be used to design a {\it non-parametric} method for simulating paths whose excursion properties match those observed in data.
The idea is to
resample from the set of empirical excursions, and concatenate them to generate paths.
We will now discuss this idea in some detail.

Given an observed (continuous) path $S$,   we  may decompose  $S$ as in Proposition
\ref{prop.decomposition} into a sequence $(e_k)_{k= 1, \ldots,D^{\delta}_T}$ of $\delta-$excursions  defined as in \eqref{eq.dexcursions}.
%$e_k$ take values in ${\cal U}_\delta$.
Denoting by $\epsilon_x$ a unit point mass at $x$, we define the {\it empirical $\delta$-excursion measure} 
    $$ \Pi_T(S)= \frac{1}{D^{\delta}_T(S)}\ \sum_{k=1}^{D^{\delta}_T(S)} \epsilon_{e_k}. $$
$\Pi_T(S)$    is a probability measure on the set ${\cal U}_\delta$ of $\delta-$excursions. One may sample from $\Pi_T$  by randomly resampling from the empirical sequence of excursions $(e_k)$.

From Ito's theory of excursions \cite{ito1972} we know that, for a Markov process, $\delta-$excursions are IID and  $\Pi_T(S)$  is thus the empirical distribution associated to a sample of size $D^{\delta}_T(S)$. If furthermore $S$ is recurrent at zero, $D^{\delta}_T(S)\to \infty$ as $T\to\infty$ so one may recover the distribution $\Pi^\delta$ of $\delta-$excursions and reconstruct the (law of) $S$ as an IID concatenation of $\delta-$excursions. 

{We now give conditions under which the structure of $S$ may be {\it recovered} from its $\delta-$excursions, providing we observe sufficiently many of them.} 
 
Let $(\mathcal{F}^S_t)_{t\geq 0}$ be the natural filtration of $S$ and define the {\it shift} operator 
\begin{eqnarray}
     \Theta_t : \mathcal{E}& \mapsto & \mathcal{E}\nonumber\\
   f  &  \rightarrow  &\Theta_t(f):=f(\cdot+t)
\end{eqnarray}
\begin{assumption}\label{ass.regenerative} \ \\
\begin{enumerate}  
    \item[(i)] $S$ is recurrent at $0$: \begin{eqnarray}\label{eq:recurrent}
\mathbb{P}(T_0^0(S)<\infty) = 1.
\end{eqnarray}
\item[(ii)]  $S$ is regenerated at zero:
there exists a measure $\mathbf{P}_0$  {on $\mathcal{E}$ such that for any $(\mathcal{F}_t)_{t\geq 0}$-stopping time $\tau$}, 
\begin{eqnarray}\label{eq:regenerative}
 \mathbb{P}(\Theta_{\tau}(S)\in \cdot \vert \mathcal{F}_{\tau}) = \mathbf{P}_0,\,\, \mathbb{P}-\text{almost surely on } \{\tau<\infty, S_{\tau}=0\}.
\end{eqnarray}
\end{enumerate}
\end{assumption}
Whereas a strong Markov process is regenerated at {\it any} stopping time, we only require this property at zero crossing times in (ii).
Construction of regenerative processes by concatenation of independent excursions has been studied by Lambert and Simatos \cite{lambert2014} and Yano \cite{yano2015}. Our $\delta-$excursion concept is related to, but slightly different from, the concept of 'big' excursion in \cite{lambert2014}; elements of $\Gamma_\delta$ may be seen as `large excursions' in the sense of \cite{lambert2014} and relate to $\delta-$excursions through the last exit decomposition, as noted in Lemma \ref{lemma.Ud}.

Under Assumption \ref{ass.regenerative},
the $\delta-$excursions $e_k$ are IID  variables with values in ${\cal U}_\delta$, whose law we denote $\Pi^\delta$. Property (i) then implies
     $$ D^\delta_T(S)\mathop{\to}^{T\to\infty} \infty. $$
   We can thus apply the law of large numbers to the empirical excursion measure: $\Pi_T$ approximates $\Pi^\delta$ for large $T$ and for any {Glivenko-Cantelli class ${\cal G}$ of functions \cite{talagrand1987} on ${\cal U}_\delta$}, representing properties of $\delta$-excursions, we have 
 \begin{eqnarray}\label{eq:empirical_measure_conv}
 \forall H\in {\cal G} \text{ and } f\in \mathcal{U}_\delta,\qquad  \int H(f) \Pi_T(df)\mathop{\to}^{T\to\infty} \int H(f) \Pi^\delta(df).
 \end{eqnarray}

 %The shortcomings of   model-based methods may be addressed by using a  {\it non-parametric} scenario simulation approach to evaluate the quantities of interest as described in Table \ref{table:nonparametric}: we decompose the empirical path into $\delta-$excursions and generate scenarios by random concatenation of such empirical $\delta-$excursions.
 
 This leads to a {\it non-parametric} approach for {  scenario simulation} based on $\delta-$excursions (see Table \ref{table:nonparametric}). {We first decompose the signal $(S_t, t\in [0,T])$ into $N:=D^\delta_T(S)$  $\delta$-excursions, then we construct a random sequence of such excursions by uniformly sampling this set with replacement and generate a new path by concatenating this random sequence of $\delta$-excursions. 
 
 } 
 
 The paths generated in this way have $\delta-$excursions whose properties mimic those of $S$, without requiring prior knowledge about $S$. By construction this leads to a regenerative process recurrent at zero, thus satisfying Assumption \ref{ass.regenerative}, and as $T\to \infty$ we recover the law of the data generating process $S$.

\begin{table}[H]
    \centering
    \begin{tcolorbox}[colback=red!5!white,colframe=red!75!black,fonttitle=\bfseries,title=Non-parametric scenario simulation by pasting of excursions]
\textbf{Input data} : sample path $(S_t, t\in [0,T])$
\begin{enumerate}
\item   Decompose $(S_t, t\in [0,T])$ into $\delta-$excursions $e_1, \ldots, e_N\in {\cal U}_\delta$ using Proposition \ref{prop.pathdecomposition}.
     \item Generate an IID sequence of integers $(k_1(\omega), k_2(\omega),  \ldots )$ where $$k_i\sim  {\rm UNIF}(\{1,2, \ldots, N\}).$$
     \item Construct a path $X(\omega)$ as in \eqref{eq.dexcursions} by concatenating the excursions in the order given by $(k_i,i\geq 1)$:
     $$ X_t(\omega)= \sum_{i\geq 1} e_{k_i(\omega)}(t-\theta^+_{i-1})\qquad{\rm where}\qquad \theta^+_0=0,\quad \theta^+_i= \sum_{j=1}^{i} \Lambda(e_{k_i}).$$
 \end{enumerate}
 \textbf{Output} : simulated sample path $X$
\end{tcolorbox}
\caption{Non-parametric scenario simulation by pasting of excursions.}
    \label{table:nonparametric}
\end{table}

{Figure \ref{fig:pasting} shows  sample paths generated using the non-parametric scenario simulation method with the KO-PEP signals during 2007-2020; see the details of the signal construction in Section \ref{sec:behavior}. We construct sample paths with four randomly sampled $\delta$-excursions (see Figure \ref{fig:pasting} {\bf Left} for the case of $\delta=\sigma$ and Figure \ref{fig:pasting} {\bf Right} for the case of $\delta=2\sigma$). By construction, paths generated in this way retain the roughness properties of the observed path as well as the empirical distribution  of  heights and  durations of $\delta-$excursions.}

More generally one may relax the regenerative assumption (ii) and consider ergodic dynamical systems in a more general setting such as stochastic dynamical systems driven by fractional Brownian motons \cite{hairer2005}, but this is beyond the scope of the present work and is left for future work.

\begin{figure}[H]
    \centering
\includegraphics[width=0.45\textwidth]{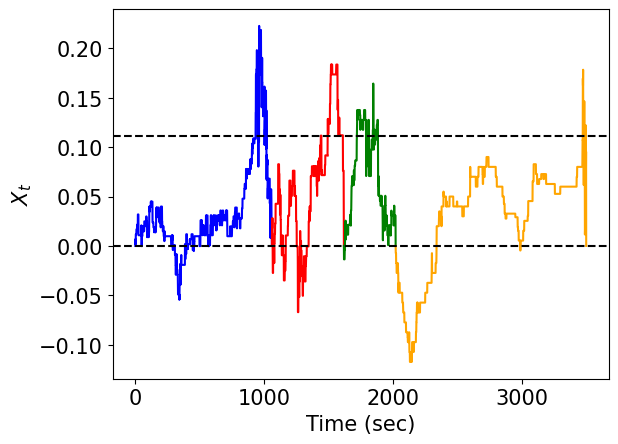}
\includegraphics[width=0.45\textwidth]{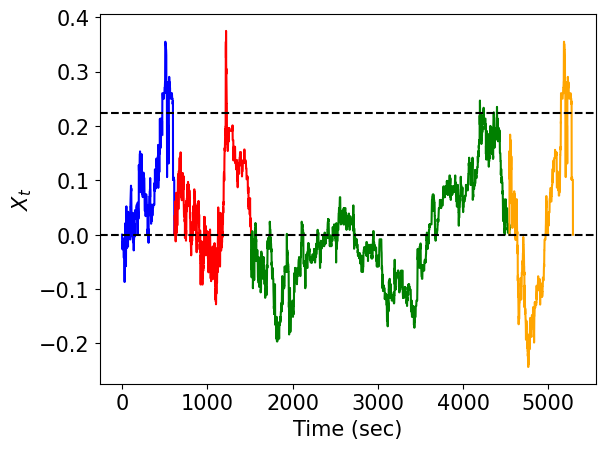}
    \caption{Example of sample path generated via the non-parametric method, using KO-PEP signals ({\bf Left}: $\sigma=\delta$; {\bf Right}: $\sigma=2\delta$).}
    \label{fig:pasting}
\end{figure}

\bibliographystyle{siam}
\bibliography{excursion.bib}
\newpage

\end{document}